\documentclass[12pt]{article}

\usepackage{fullpage}
\usepackage{authblk} \usepackage{xcolor}
\usepackage{amsmath}
\usepackage{amssymb}
\usepackage{amsthm}
\usepackage{physics}
\usepackage{caption}
\usepackage{subcaption}
\usepackage{nicefrac}
\usepackage{tikz}
\usetikzlibrary{positioning,calc,intersections}
\usepackage{anyfontsize}
\usepackage{relsize}
\usepackage{braket}
\usepackage{quantikz}
\usepackage[hidelinks]{hyperref} \usepackage{enumitem}
\usepackage{float}
\usepackage[maxbibnames=10,sorting=nyt,date=year,bibstyle=authoryearalphabetic,citestyle=alphabetic,giveninits,maxalphanames=3,minalphanames=3,dashed=false]{biblatex}
   \DeclareDelimFormat{finalnamedelim}{\addspace\bibstring{and}\space} \AtEveryBibitem{\clearfield{extradate}}
\DeclareSourcemap{
  \maps[datatype=bibtex]{
    \map{
      \step[fieldsource=issue, match=\regexp{\A[0-9]+\Z}, final]
      \step[fieldset=number, origfieldval]
      \step[fieldset=issue, null]
    }
  }
}
\renewbibmacro*{volume+number+eid}{\printfield{volume}\setunit*{\addnbspace}\printfield{number}\setunit{\addcomma\space}\printfield{eid}}
\DeclareFieldFormat[article]{number}{\mkbibparens{#1}}
\DeclareFieldFormat{titlecase}{\ifboolexpr{
    test {\ifcurrentfield{title}}
  }
    {\MakeSentenceCase*{#1}}{#1}}

\newtheorem{theorem}{Theorem}[section]
\newtheorem{lemma}[theorem]{Lemma}

\begin{document}

\title{\fontsize{18}{21.6}\selectfont On the power of multipartite entanglement for pseudotelepathy}

\author[1]{Gilles Brassard}
\author[2,3,4]{Xavier Coiteux-Roy}
\author[1]{Rémi Ligez}
\date{}

\affil[1]{Département d'informatique et de recherche opérationnelle, Université de Montréal, Montréal, Canada H3C 3J7}
\affil[2]{Department of Computer Science, University of Calgary, Calgary, Canada T2N 1N4}
\affil[3]{Department of Physics and Astronomy, University of Calgary, Calgary, Canada T2N 1N4}
\affil[4]{Institute for Quantum Science and Technology, University of Calgary, Calgary, Canada T2N 1N4}

\maketitle
\vspace{-4.8em}
\begin{center}
  \fontsize{10}{12}\selectfont\href{mailto:brassard@iro.umontreal.ca}{\texttt{brassard@iro.umontreal.ca}} \quad \href{mailto:xavier.coiteuxroy@ucalgary.ca}{\texttt{xavier.coiteuxroy@ucalgary.ca}} \quad \href{mailto:remi.ligez@umontreal.ca}{\texttt{remi.ligez@umontreal.ca}}
\end{center}
\begin{abstract}
As early as 1935, Schr\"odinger recognized \emph{entanglement} as ``not \emph{one} but rather \emph{the} characteristic trait of quantum mechanics, the one that enforces its entire departure from classical lines of thought''. Indeed, most remarkable phenomena in quantum information science, such as quantum computing and quantum teleportation, spring from clever uses of entanglement. Among them, \emph{pseudotelepathy} enables two or more players to win systematically at some cooperative games with no need for communication between them, a restriction that would make the task impossible in a classical world. We~investigate the power of \emph{multipartite} entanglement for pseudotelepathy. Some known games that can be won with tripartite entanglement cannot be won with bipartite entanglement, but they can be won with bipartite \emph{nonsignalling resources} such as the so-called Popescu--Rohrlich nonlocal box. We exhibit a five-player game that can be won with tripartite entanglement, but not with arbitrary bipartite nonsignalling \mbox{resources} even in the presence of arbitrary five-partite classical resources. This illustrates both the power of bipartite nonsignalling resources (over bipartite entanglement) and the even superior power of tripartite entanglement.
\end{abstract}

\section{Introduction}
Entanglement is at the heart of some of the most fascinating phenomena in quantum information science.
As~soon as it was discovered in 1935 by Einstein, Podolsky and Rosen~\cite{EinsteinPodolskyRosen1935},
Schr\"odinger recognized it as ``not \emph{one} but rather \emph{the} characteristic trait of quantum mechanics, the one that enforces its entire departure from classical lines of thought''~\cite{Schrodinger1935}.
Perhaps its most famous applications are quantum computing~\cite{Feynman1982,Deutsch1985},
quantum teleportation~\cite{BBCJPW1993} 
and entanglement-based quantum cryptography~\cite{Ekert1991,BennettBrassardMermin1992},
but it is also crucial in communication complexity~\cite{CleveBuhrman1997}
because prior entanglement makes it possible for two or more physically separated parties to reduce exponentially
the amount of classical communication required between them to compute some functions
of their joint input~\cite{BuhrmanCleveWigderson1998}.
The ultimate form of communication reduction is known as \emph{pseudotelepathy}~\cite{Brassard2003},
in which the amount of communication is reduced to zero. However, to achieve this, we must weaken the objective: none of the participants learn the value of the function
(which would be impossible if it depended on both inputs, since entanglement
could otherwise be harnessed to enable instantaneous communication),
yet they each produce outputs that could not be obtained without communication
in a classical setting.
The term ``pseudotelepathy'' was coined in 1999 by Alain Tapp \cite{Tapp1999} because it
gives the \emph{illusion} of instantaneous communication between the participants when in
fact no such communication takes place.

In~this paper, we study the power of \emph{multi}partite entanglement for pseudotelepathy
by answering positively the following question: Can there be a pseudotelepathy game
that genuinely requires multipartite entanglement to be won?
Such a separation is straightforward when only bipartite entanglement (possibly supplemented by multipartite classical correlations) is allowed: we exhibit in
Appendix~\ref{app:GHZ-Mermin}
a simple pseudotelepathy game that can be won with tripartite entanglement but not with bipartite entanglement. However, we wish to allow arbitrary bipartite resources, provided they do not entail communication between the participants.
Such resources are called \emph{nonsignalling}. This distinction is fundamental because these resources are known to be more powerful than bipartite entanglement for some tasks~\cite{BroadbentMethot2006}; in particular, we also show that the game mentioned above can nevertheless be won with a single bipartite nonsignalling resource, namely the so-called \mbox{PR--box}~\cite{PopescuRohrlich1994} described below.
We~solve our conundrum with a \emph{five}-player pseudotelepathy game
that needs tripartite entanglement to be won, but leave open the question of whether or not
a \emph{three}-player solution might exist.
To the best of our knowledge, this is the first proof using the inflation technique for pseudotelepathy games.
Furthermore, this illustrates again the power of bipartite
nonsignalling resources, compared to bipartite entanglement, and demonstrates the even superior power of tripartite entanglement.

\section{Preliminaries} \label{sec:Prelim}

In this section, we review the definitions of nonlocal games, pseudotelepathy games, and the classes of strategies relevant to these games, as well as the inflation technique. This will set the stage for the following sections.

\subsection{Nonlocal games} \label{subsec:Nonlocal_games}

An \emph{$n$-player nonlocal game}, or simply an \emph{$n$-player game}, is defined as follows. Consider $n$~players, each of whom has a finite input set and a finite output set. Throughout this work, when discussing two- or three-player games, we refer to the players using the conventional names Alice, Bob and Charlie. In a \emph{game instance}, each player is asked a question from his input set. Typically, there is also a \emph{promise} regarding the inputs; meaning that certain combinations of inputs across players might not be allowed. Then, each player must reply with an answer from his output set. The players may agree on a strategy before receiving the questions, including the use of shared quantum states and even nonsignalling devices (defined below), but they are not allowed to communicate once the game instance has started.~~

The \emph{rule of the game} defines which combinations of outputs are accepted given the questions received by each player. The players \emph{win the game instance} if they manage to answer an accepted combination of outputs for their inputs. Furthermore, the players \emph{win the game} if they win with certainty all possible game instances allowed by the promise. In some cases, we are not only interested in whether or not the players win the game, but rather in their \emph{winning probability}, which can be defined in several inequivalent ways. Perhaps the most natural approach is to define the winning probability of a nonlocal game as the probability of success on the worst game instance, while a more prevalent one defines it as the average probability of success, where the average is taken over some specified probability distribution on the inputs. It is most usual, albeit not necessary, that the inputs be distributed uniformly on all game instances that respect the promise, but it is required that all such instances have nonzero probability. Throughout this work, we adopt the first approach whenever the distinction is relevant; nevertheless, for the purpose of determining whether the players win the game, all these definitions are equivalent because the players win the game if and only if their winning probability is~1.

A special class of these games is known as \emph{pseudotelepathic games} \cite{Brassard2003,BrassardBroadbentTapp2005}. These are games for which there exists a strategy using quantum correlations that wins the game, while no classical strategy can. Two well-known examples of nonlocal games are the CHSH game \cite{ClauserHorneShimonyHolt1969} and the GHZ--Mermin game \cite{GreenbergerHorneZeilinger1989,Mermin1990}. While the latter is pseudotelepathic, the former is not. The CHSH game is a simple 2-player nonlocal game: Alice and Bob each receive an input bit \(x,y \in \{0,1\}\) and must output a bit \(a,b \in \{0,1\}\); they win if \(a \oplus b = x \wedge y\) (in other words, \(a \neq b\) if and only if \(x = y = 1\)).  Here, ``\,$\oplus$\,'' denotes the logical XOR (eXclusive-OR) operation and ``$\wedge$'' denotes the logical AND operation. Similarly, we use ``\,$\vee$\,'' to denote the logical OR operation. The optimal winning probability, when restricted to quantum strategies, is \(\cos^{2}(\nicefrac{\mathlarger{\pi}}{8}) \approx 85\%\) for the CHSH game \cite{Cirelson1980}, whereas it would be 75\% when restricted to classical strategies. The GHZ--Mermin game is a \mbox{3-player} nonlocal game: Alice, Bob and Charlie each receive an input bit \( x, y, z \in \{0,1\} \) with the promise that \( x \oplus y \oplus z = 0\). Each player must then output a bit \( a, b, c \in \{0,1\}\), and the players win if \( a \oplus b \oplus c = x \vee y \vee z\).  

\subsection{Strategies}
We now provide a more precise characterization of the various strategies the players may use. First, we consider the case in which the players share \emph{classical randomness}: at the start of each game instance, a variable $\lambda$, which may be continuous, is sampled from a set $\Lambda$ according to some probability distribution \(\text{P}_{\Lambda}(\cdot)\), independently for each game instance, and given to each player. The shared randomness may include local hidden variables in the sense of Bell \cite{Bell1964}. A player is \emph{classical} if his output is uniquely determined by his input and the shared randomness. This definition is general, as any local randomness that players may wish to use can be included into the shared randomness. We say that a player is \emph{deterministic} if he is classical and makes no use of the shared randomness. Furthermore, the players are said to use a \emph{quantum} strategy if they initially share one (or, equivalently, more than one) quantum state, and their output is the result of a quantum measurement on their share of the state that depends on their input and the shared randomness\,\footnote{\,It may seem that the shared randomness is not useful in the quantum case since it can be obtained locally by measuring parts of the shared quantum state, but it remains relevant when the quantum states are not shared amongst all players, as we shall consider later.}. Common examples of such states include the bipartite Einstein--Podolsky--Rosen (EPR) state~\mbox{\cite{EinsteinPodolskyRosen1935,Bell1964}} and the tripartite Greenberger--Horne--Zeilinger (GHZ) state~\cite{GreenbergerHorneZeilinger1989}, defined as
\begin{align}
    \ket{\text{EPR}} &= \textstyle{\frac{1}{\sqrt{2}}} \left( \ket{00} + \ket {11}\right) \,,\\
    \ket{\text{GHZ}} &= \textstyle{\frac{1} {\sqrt{2}}} \left( \ket{000} + \ket {111}\right) \,.
\end{align}
More generally, the players may use an arbitrary \emph{nonsignalling} strategy. Intuitively, non\-sig\-nalling strategies capture the most general form of correlation that is consistent with the fact that the players cannot communicate after receiving their inputs. They include, but are not limited to, classical and quantum strategies. A nonsignalling strategy is one in which the players' outputs may depend on the combination of inputs and the shared randomness, but in such a way that no player’s input can influence the output distribution of any set of other players. In particular, for every player, the marginal distribution of his output must be independent of the inputs given to the other players. The \emph{PR--box}, named after Popescu and Rohrlich \cite{PopescuRohrlich1994}, is perhaps the best-known example of a nonsignalling correlation that cannot be realized within quantum theory. It is a bipartite device in which each player inputs a bit \( x, y \in \{0,1\} \), and receives a bit \( a, b \in \{0,1\} \) as output, such that \( a \oplus b = x \wedge y \), with both valid pairs of outputs occurring with probability~\( \nicefrac{1}{2} \). This box would obviously allow players to win the CHSH game described above.

\subsection{Inflation}\label{subsec:Inflation}

The \emph{inflation technique} \cite{WolfeSpekkensFritz2019} is a tool developed to analyse correlations in nonlocal games and in more general causal network settings. It is a thought experiment in which multiple copies of the original players and resources are considered. These copies are arranged in a way that preserves the causal structure while making it possible to derive constraints on the original scenario. The main idea is that if a given correlation can arise in the original scenario, then it should also be compatible with the existence of a more general correlation in the inflated one. It follows that the existence of a behaviour in the inflated scenario imposes necessary constraints on the distributions that can be realized in the original game.

A notable advantage of the inflation technique is that it relies on minimal assumptions. The~first is the \emph{device replication} assumption, which states that identical and independent copies of any player and any resource can exist. The second is that all resources \mbox{considered} must be nonsignalling. The following two facts can be derived directly from these assumptions:
\begin{enumerate}[label= Fact \upshape(\Roman*), leftmargin=*]
    \item\label{Fact:1} If two players do not share a common \mbox{resource}, they exhibit statistically independent behaviours.
    \item\label{Fact:2} If two groups of players and their associated resources — whether in the original or inflated scenario — are locally isomorphic, they exhibit identical behaviours.
\end{enumerate}

We present an elementary example, namely the \emph{3-way $p$--coin flip}, illustrating how the inflation technique can be used as a proof method. We say that a group of three players achieves the 3-way $p$--coin flip if their 1-bit outputs are jointly (0,0,0) with probability $p$ and~(1,1,1) with complementary probability \(1-p\):
\begin{equation} \label{eq:ProbCoinFlip}
    \Pr[A = B = C = 0] = p \,\text{ and } \Pr[A = B = C = 1] = 1-p \,.    
\end{equation}
We show, via the inflation technique, that for any probability $p$ different from 0 or 1, the 3-way $p$--coin flip cannot be achieved using only bipartite resources, even with access to arbitrarily nonsignalling ones. Assume for a contradiction that three players achieve the 3-way $p$--coin flip with only bipartite resources as depicted in Figure~\ref{fig:Inflation1a} (see next page). We~then use the inflated scenario of Figure~\ref{fig:Inflation1b}.

On one hand, by~\ref{Fact:2}, we get
\begin{equation}
    \Pr\bigl[\Hat{A} = \Hat{C}\bigr] = \Pr\bigl[A = C\bigr] = 1 \,\text{ and }\Pr\bigl[\Hat{B} = \Hat{C}\bigr] = \Pr\bigl[B = C\bigr] = 1 \,,
\end{equation}
which, by transitivity, implies that \(\Pr\bigl[\Hat{A} = \Hat{B}\bigr] = 1\). On~the other hand, by~\mbox{\ref{Fact:1}}, and then~\mbox{\ref{Fact:2}}, for any \mbox{\(a,b \in \{0,1\}\)},
\begin{equation}
    \Pr\bigl[\Hat{A} = a,\Hat{B} = b\bigr] = \Pr\bigl[\Hat{A} = a\bigr] \cdot \Pr\bigl[\Hat{B} = b\bigr] = \Pr\bigl[A = a\bigr] \cdot \Pr\bigl[B = b\bigr] \,, 
\end{equation}
which gives us that
\begin{align}
    \Pr\bigl[\Hat{A} = \Hat{B}\bigr] &=  \Pr\bigl[A = 0\bigr] \cdot \Pr\bigl[B = 0\bigr] + \Pr\bigl[A = 1\bigr] \cdot \Pr\bigl[B = 1\bigr]\nonumber\\
    &= p^{2} + (1-p)^{2} \neq 1
\end{align}
for any $p$ that is neither 0 nor 1, where the last line follows from the marginal probabilities of Eq.~(\ref{eq:ProbCoinFlip}). We conclude from this contradiction that the 3-way $p$--coin flip, for any \(p \in (0,1)\), cannot be obtained by any nonsignalling bipartite resources. With this result in mind, we introduce a 3-player game called the \emph{equality game}. It is a game with no input\,\footnote{\,To satisfy the definition of a game given in Section~\hyperref[subsec:Nonlocal_games]{\ref*{subsec:Nonlocal_games}}, we may assign input sets of size 1 to all players.}~in which each player must output a bit \(a,b,c \in \{0,1\} \) such that \(a = b = c\). Clearly, the equality game can be won even without any shared resource, as the players can simply agree in advance to always output the same fixed bit. However, the previous result on the 3-way $p$--coin flip tells us that if, for some reason, the players are prevented from always replying with a fixed bit, then even access to arbitrary nonsignalling bipartite resources is no longer sufficient to win the equality game. This observation will prove crucial later on. For further details on the inflation technique, the reader is referred to Refs.~\cite{WolfeSpekkensFritz2019,NavascuesWolfe2020,CoiteuxRoyWolfeRenou2021-2}.

\begin{figure}[H]
\centering
\begin{subfigure}[b]{0.40\textwidth}
    \centering
    \vspace{0.23cm} \includegraphics[scale=0.3]{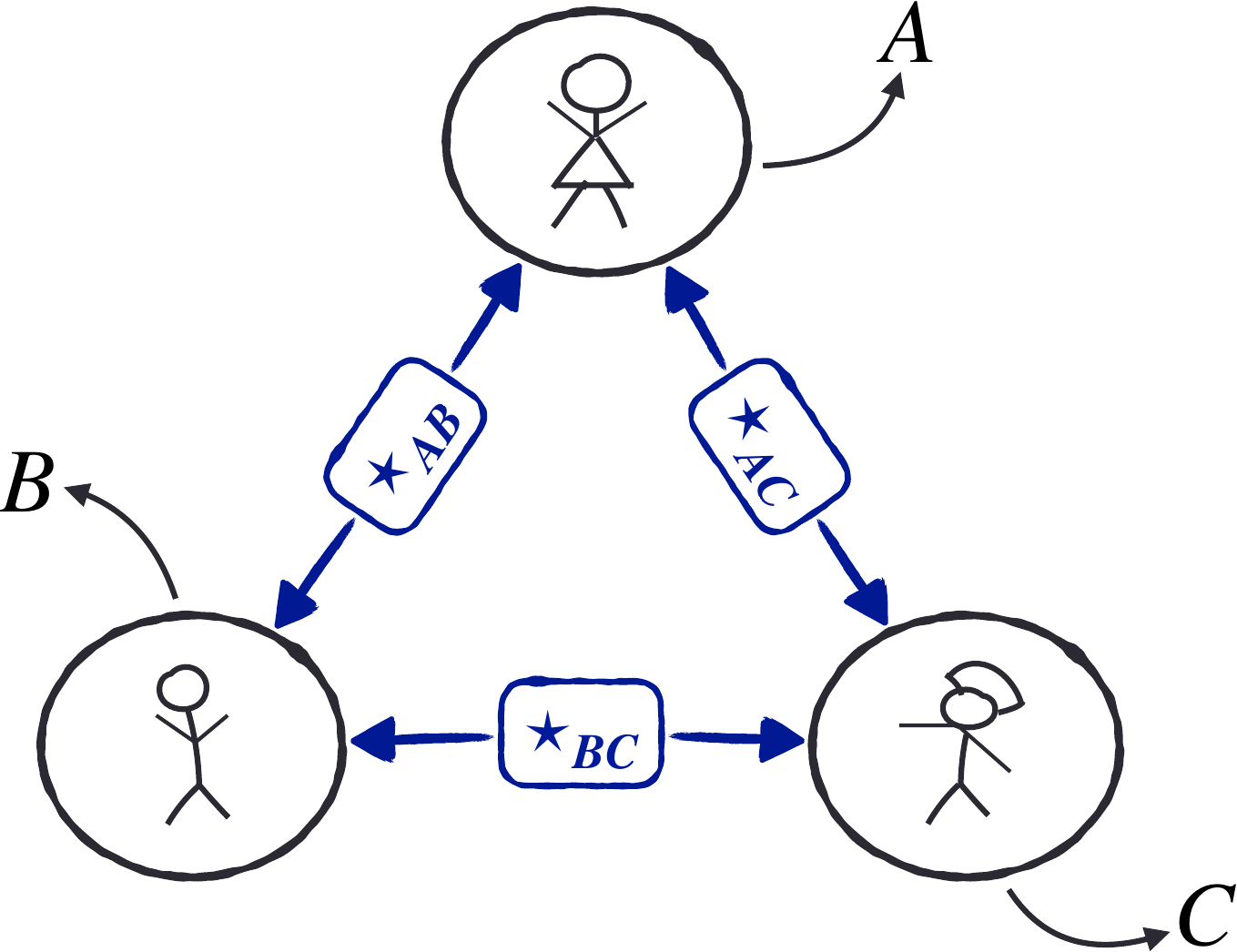}
    \vspace{0.92cm} \caption{}
    \label{fig:Inflation1a}
\end{subfigure}
\hfill
\begin{subfigure}[b]{0.56\textwidth}
    \centering
    \includegraphics[scale=0.3]{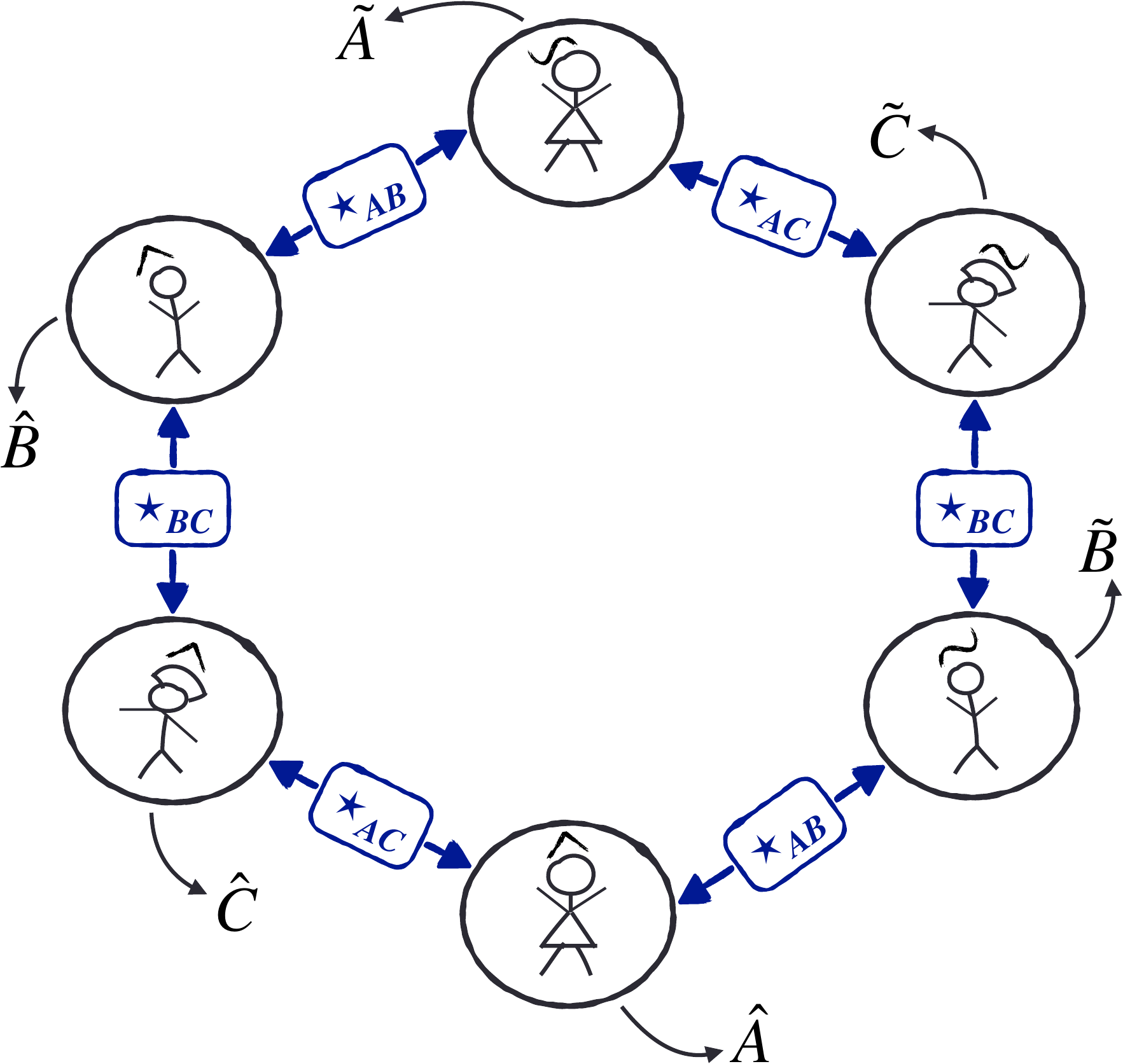}
    \caption{}
    \label{fig:Inflation1b}
\end{subfigure}
\caption{\small (a) Original scenario with three players (Alice, Bob and Charlie) sharing bipartite resources pairwise. (b) Inflated scenario obtained by duplicating and rearranging players and resources. Each duplicated player represents an indistinguishable copy of the same local input-output process as in the original scenario and each duplicated resource represents an independent copy of the corresponding original resource.}
\label{fig:Inflation1}
\end{figure}

\section{Towards our result}

In the context of nonlocal games, it is natural to compare the power of different types of shared resources. In particular, we are interested in comparing bipartite and tripartite correlations. If our objective were simply to exhibit a pseudotelepathic game that cannot be won using bipartite quantum resources, the GHZ--Mermin game already provides such an example. It is well known that no classical strategy can win the GHZ--Mermin game, but a quantum strategy in which the players share a single GHZ state can. Although less widely known, it is also true that this game cannot be won using arbitrary bipartite quantum resources, and we provide a proof of this in Appendix~\ref{app:GHZ-Mermin}.
However, this game \emph{can} be won using bipartite nonsignalling resources --- specifically, a single PR--box shared between Alice and Bob. In this case, Alice and Bob input their respective bits into the PR--box and use the outputs as their answers, while Charlie simply outputs his input bit. One can easily verify that this yields a winning strategy for the GHZ--Mermin game. This simple example illustrates why it is relevant to consider not only bipartite quantum correlations, but also bipartite correlations that are arbitrarily nonsignalling.

We now define a new game, the \emph{majority game}, which to the best of our knowledge has not been previously considered in the literature. It is a natural generalization of the GHZ--Mermin game: Alice, Bob and Charlie each receive an input bit \(x,y,z \in \{0,1\}\) with no promise. Each player must then output a bit \(a,b,c \in \{0,1\}\), and the players win if \mbox{\(a \oplus b \oplus c = \text{Maj}(x,y,z)\)}, where \(\text{Maj}(x,y,z) = 1\) if at least two of \(x,y,z\) are 1, and \(\text{Maj}(x,y,z) = 0\) otherwise. This game generalizes the GHZ--Mermin game because \( \text{Maj}(x,y,z) = x \vee y \vee z \) whenever the GHZ--Mermin promise is satisfied. The majority game cannot be won classically --- an immediate consequence of the fact that it generalizes the GHZ--Mermin game, which is pseudotelepathic. However, unlike the GHZ--Mermin game, the majority game is not pseudotelepathic: we prove in Appendix~\ref{app:MajorityGame}
that no quantum strategy can win the game with probability better than \(\cos^{2}(\nicefrac{\mathlarger{\pi}}{8}) \approx 85\%\).

One will recognize this as the optimal quantum winning probability for the CHSH game, and this is no coincidence: as pointed out by Scott Aaronson~\cite{Aaronson2025}, any strategy for the majority game can be transformed into a strategy for CHSH that wins with the same probability.
In Appendix~\ref{app:MajorityGame},
we provide the proof of this reduction, as well as a quantum strategy that achieves this optimal success probability.
Nevertheless, three PR--boxes, one shared by each pair of players, suffice to win the majority game, as we also prove in Appendix~\ref{app:MajorityGame}.
This further highlights the remarkable power of PR--boxes \cite{BroadbentMethot2006}, and more generally, of arbitrary nonsignalling correlations: they can surpass quantum theory, achieving success in situations where quantum strategies cannot. For this reason, in what follows, we compare tripartite quantum correlations not just with bipartite quantum ones, but with bipartite correlations that may be arbitrarily nonsignalling. In our setting, these bipartite correlations are also supplemented with multipartite shared randomness.

If our sole objective were to exhibit a game in which tripartite quantum correlations outperform any strategy using only bipartite resources --- even if those resources are arbitrarily nonsignalling --- Ref.~\cite{CoiteuxRoyWolfeRenou2021} already provided such an example with a three-player game that is essentially a combination of the CHSH game and the equality game. Specifically, Bob and Charlie receive, as part of their inputs, a bit indicating whether Alice and Bob are to play CHSH, or whether all three players are to play the equality game, whereas Alice is kept ignorant of this fact. It is proven in Ref.~\cite{CoiteuxRoyWolfeRenou2021} that a higher winning probability can be achieved if Alice, Bob and Charlie share a GHZ state than if they share any bipartite nonsignalling correlations supplemented with multipartite shared randomness. The~intuition is that shared randomness alone cannot win CHSH, while bipartite correlations are insufficient to win the equality game. Since Alice does not know which of the two games she is playing, she cannot fully exploit the resources for both. However, this is not a pseudotelepathy game. Recall that CHSH itself is not pseudotelepathic: no quantum state allows the players to exactly reproduce the CHSH correlations. It is therefore unsurprising that combining the CHSH game with the equality game --- producing a game strictly harder than either one --- results in a game for which no winning quantum strategy exists.

A natural next step is to ask what happens if we replace the CHSH game with a pseu\-do\-te\-lep\-a\-thy game. A~compelling candidate is perhaps the best-known two-player pseu\-do\-te\-lep\-a\-thy game: the \emph{magic square game}~\cite{Cabello2001,Mermin1990-2,Peres1990}.
However, this substitution does not lead directly to the desired outcome. Indeed, if one combines the magic square game with the equality game in the same natural manner as in the construction of Ref.~\cite{CoiteuxRoyWolfeRenou2021}, the resulting nonlocal game can in fact be won with a single PR--box, as we show in Appendix~\ref{app:MagicSquareAndEquality}
--- yet another striking demonstration of the power of PR--boxes.

We now turn to a different family of pseudotelepathy games that will be central to our result. Specifically, we shall make use of the \emph{magic pentagram game} \cite{Mermin1990-2}. Since its description may not appear intuitive at first sight, we begin by presenting two equivalent formulations of the magic square game: the standard one, and an alternative description that will serve as a stepping stone toward understanding the magic pentagram game. First, the usual way to define the magic square game is based on a \mbox{\(3 \times 3\)} grid whose entries are in~\{0,1\}, as in Figure~\ref{fig:msgGrid}. The objective is to fill the grid so that the parity of each row is even, while the parity of each column is odd (which is obviously impossible!). Alice is asked to provide the values for all three entries of a specified row, whose parity must be even, while Bob is asked to provide the values for all three entries of a specified column, whose parity must be odd. Furthermore, their answers must be consistent on the single cell where their row and column intersect.

\begin{figure}[H]
    \centering
    \begin{minipage}[t]{0.48\textwidth}
    \centering
    \resizebox{0.8\linewidth}{!}{\tikzset{every picture/.style={scale=0.9}}
\begin{tikzpicture}
  \draw[line width=1pt]
    (0,0) grid (3,3);

  \node at (0.5,2.5) {0};
  \node at (1.5,2.5) {0};
  \node at (2.5,2.5) {0};

  \node at (0.5,1.5) {0};
  \node at (1.5,1.5) {1};
  \node at (2.5,1.5) {1};

  \node at (0.5,0.5) {1};
  \node at (1.5,0.5) {0};
  \node at (2.5,0.5) {?};
\end{tikzpicture} }
    \captionsetup{width=\linewidth}
    \captionof{figure}{\small Standard grid formulation of the magic square game. Each cell contains a bit. The parity of each row is required to be even, while the parity of each column is required to be odd.}
    \label{fig:msgGrid}
    \end{minipage}
    \hfill
    \begin{minipage}[t]{0.48\textwidth}
    \centering
    \resizebox{0.8\linewidth}{!}{\tikzset{every picture/.style={scale=0.9}}
\tikzset{
  point/.style = {
    circle, draw=black, fill=white,
    line width=1.2pt, minimum size=13mm,
    inner sep=0pt, font=\LARGE
  },
  SolidEdge/.style = {line width=1.6pt},
  DottedEdge/.style = {line width=1.6pt, dashed}
}

\begin{tikzpicture}[node distance=1.8cm and 1.8cm]

\node[point] (A1) at (0,0)       {0};
\node[point] (A2) [right=of A1]  {0};
\node[point] (A3) [right=of A2]  {0};

\node[point] (B1) [below=of A1]  {0};
\node[point] (B2) [below=of A2]  {1};
\node[point] (B3) [below=of A3]  {1};

\node[point] (C1) [below=of B1]  {1};
\node[point] (C2) [below=of B2]  {0};
\node[point] (C3) [below=of B3]  {?};

\foreach \row in {A,B,C} {
  \draw[SolidEdge] (\row1) -- (\row2) -- (\row3);
}

\foreach \col in {1,2,3} {
  \draw[DottedEdge] (A\col) -- (B\col) -- (C\col);
}

\end{tikzpicture}
 }
    \captionsetup{width=\linewidth}
    \captionof{figure}{\small Hypergraph formulation of the magic square game. Each node corresponds to a cell of the \mbox{$3 \times 3$} grid of Fig.~\ref{fig:msgGrid}. Each row or column of the grid of Fig.~\ref{fig:msgGrid} defines a hyperedge connecting the corresponding three nodes. Solid lines represent hyperedges whose parity must be even, while dashed lines represent ones whose parity must be odd.}
    \label{fig:msgGraph}    
    \end{minipage}
\end{figure}

An alternative but equivalent way to describe the magic square game is in terms of a hypergraph (see Figure~\ref{fig:msgGraph}). Each cell of the \mbox{\(3 \times 3\)} grid is viewed as a node of the hypergraph, and each row or column of the grid corresponds to a hyperedge, represented by a line connecting the relevant nodes. The~parity condition on the rows and columns then becomes a parity condition on the corresponding hyperedges: every row-hyperedge must have even parity, and every column-hyperedge must have odd parity.

This idea extends naturally to the magic pentagram game. Here, the underlying structure is a pentagram-shaped hypergraph (see Figure~\ref{fig:mpgGraph}). Each of the five straight lines of the pentagram forms a hyperedge containing four nodes and has an assigned parity constraint. Alice and Bob are assigned distinct hyperedges and must provide values for the four nodes of their respective hyperedge. Their answers must satisfy the parity conditions and must be consistent on the node where the two hyperedges intersect.
\begin{figure}[H]
    \centering
    \resizebox{0.54\linewidth}{!}{\def\labPzero{0}         \def\labPone{?}          \def\labPtwo{1}          \def\labPthree{1}        \def\labPfour{0}         

\def\labIzero{1}         \def\labIone{0}          \def\labItwo{1}          \def\labIthree{0}        \def\labIfour{1}         

\tikzset{
  point/.style   = {circle, draw=black, fill=white,   line width=1.2pt, minimum size=13mm,
                    inner sep=0pt, font=\Large},
  SolidEdge/.style  = {line width=1.6pt},
  DottedEdge/.style = {line width=1.6pt, dashed},
}

\begin{tikzpicture}[scale=1.15, transform shape]

\def\R{5.0}
\path
  ({\R*cos(90) }, {\R*sin(90) })  coordinate (P0)  ({\R*cos(18) }, {\R*sin(18) })  coordinate (P1)
  ({\R*cos(-54)}, {\R*sin(-54)})  coordinate (P2)
  ({\R*cos(-126)}, {\R*sin(-126)})coordinate (P3)
  ({\R*cos(162)}, {\R*sin(162)})  coordinate (P4);

\path[name path=L0] (P0) -- (P2);
\path[name path=L1] (P1) -- (P3);
\path[name path=L2] (P2) -- (P4);
\path[name path=L3] (P3) -- (P0);
\path[name path=L4] (P4) -- (P1); 

\path[name intersections={of=L0 and L4, by=I0}];
\path[name intersections={of=L0 and L1, by=I1}];
\path[name intersections={of=L1 and L2, by=I2}];
\path[name intersections={of=L2 and L3, by=I3}];
\path[name intersections={of=L3 and L4, by=I4}];

\draw[SolidEdge]  (P0)--(P2);
\draw[SolidEdge]  (P1)--(P3);
\draw[SolidEdge]  (P2)--(P4);
\draw[SolidEdge]  (P3)--(P0);
\draw[DottedEdge] (P4)--(P1); 

\node[point] at (P0) {\labPzero};
\node[point] at (P1) {\labPone};
\node[point] at (P2) {\labPtwo};
\node[point] at (P3) {\labPthree};
\node[point] at (P4) {\labPfour};

\node[point] at (I0) {\labIzero};
\node[point] at (I1) {\labIone};
\node[point] at (I2) {\labItwo};
\node[point] at (I3) {\labIthree};
\node[point] at (I4) {\labIfour};

\end{tikzpicture} }
    \caption{\small Hypergraph formulation of the magic pentagram game. Each straight line of the pentagram defines a hyperedge containing four nodes and carrying a parity constraint. Solid lines represent hyperedges whose parity must be even, while the dashed line corresponds to an odd parity.}
    \label{fig:mpgGraph}
\end{figure}
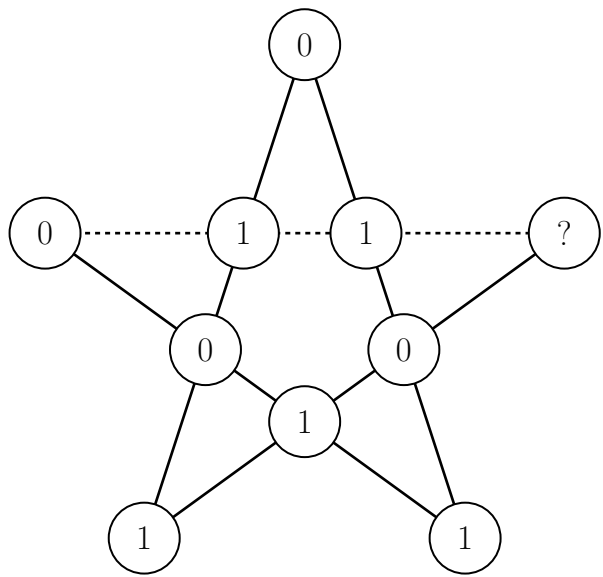
\vspace{-1em}
For our purposes, we shall work with a slightly modified version of the magic pentagram game: Bob is always asked about the same hyperedge — specifically, the horizontal one, which carries an odd parity requirement — while Alice is asked about one of the other four hyperedges. This game cannot be won classically since it is impossible to assign values 0 or 1 to all nodes of the pentagram in a way that simultaneously satisfies all the parity constraints. Indeed, suppose such an assignment existed. Each node belongs to exactly two hyperedges, so summing all node values of each hyperedge counts each node twice and must therefore yield an even total. On the other hand, the same sum can also be obtained by adding the parities of the five hyperedges. Since four of the five hyperedges require even parity and one requires odd parity, the total must be odd --- a~contradiction. Note that this argument remains valid for any other assignment of parity conditions on the hyperedges, provided that the number of hyperedges requiring odd parity is itself odd --- a~fact that we shall use in the next section. Moreover, instead of requiring the players to provide values for all four nodes of their respective hyperedges, they are only asked to provide values for three of them, as the parity condition uniquely determines the value of the fourth.

\section{Our result}\label{sec:OurResult}

The central idea of our construction is to combine two types of games: a pseudotelepathy game, which cannot be won using classical strategies, and the equality game, which cannot be won using only bipartite resources; with the objective of producing a game that requires genuinely multipartite correlations to be won. For the pseudotelepathy game, we shall use a tailored version of the magic pentagram game introduced above in which Alice is split into three distinct players denoted $\textnormal{Alice}_{1}$, $\textnormal{Alice}_{2}$ and $\textnormal{Alice}_{3}$. Each of them is responsible for labelling exactly one of the three nodes of the hyperedge that, in the original game, would have been assigned to Alice as a whole. More formally, for each $i$, $\textnormal{Alice}_{i}$ receives as input a node \(x_i \in \{0,1\}\) in the pentagram hypergraph following Figure~\ref{fig:MPGGraph3Alices}, with the promise that $x_1$, $x_2$ and $x_3$ belong to the same hyperedge (i.e.~\( (x_1,x_2,x_3) \in \{(0,0,0),(0,1,1),(1,0,1),(1,1,0)\}\), or, more compactly, \mbox{\( x_1 \oplus x_2 \oplus x_3 = 0\)}).
\begin{figure}[H]
    \centering
    \resizebox{0.56\linewidth}{!}{\def\labPzero{$x_1\!=\!1$}   \def\labPone{}               \def\labPtwo{$x_2\!=\!1$}    \def\labPthree{$x_2\!=\!0$}  \def\labPfour{$b_1$}         

\def\labIzero{$b_3$}         \def\labIone{$x_3\!=\!0$}    \def\labItwo{$x_1\!=\!0$}    \def\labIthree{$x_3\!=\!1$}  \def\labIfour{$b_2$}         

\tikzset{
  point/.style   = {circle, draw=black, fill=white,   line width=1.2pt, minimum size=16mm,
                    inner sep=0pt, font=\Large},
  SolidEdge/.style  = {line width=1.6pt},
  DottedEdge/.style = {line width=1.6pt, dashed},
}

\begin{tikzpicture}[scale=1.15, transform shape]

\def\R{5.0}
\path
  ({\R*cos(90) }, {\R*sin(90) })  coordinate (P0)  ({\R*cos(18) }, {\R*sin(18) })  coordinate (P1)
  ({\R*cos(-54)}, {\R*sin(-54)})  coordinate (P2)
  ({\R*cos(-126)}, {\R*sin(-126)})coordinate (P3)
  ({\R*cos(162)}, {\R*sin(162)})  coordinate (P4);

\path[name path=L0] (P0) -- (P2);
\path[name path=L1] (P1) -- (P3);
\path[name path=L2] (P2) -- (P4);
\path[name path=L3] (P3) -- (P0);
\path[name path=L4] (P4) -- (P1); 

\path[name intersections={of=L0 and L4, by=I0}];
\path[name intersections={of=L0 and L1, by=I1}];
\path[name intersections={of=L1 and L2, by=I2}];
\path[name intersections={of=L2 and L3, by=I3}];
\path[name intersections={of=L3 and L4, by=I4}];

\draw[SolidEdge]  (P0)--(P2);
\draw[SolidEdge]  (P1)--(P3);
\draw[SolidEdge]  (P2)--(P4);
\draw[SolidEdge]  (P3)--(P0);
\draw[DottedEdge] (P4)--(P1); 

\node[point] at (P0) {\labPzero};
\node[point] at (P1) {\labPone};
\node[point] at (P2) {\labPtwo};
\node[point] at (P3) {\labPthree};
\node[point] at (P4) {\labPfour};

\node[point] at (I0) {\labIzero};
\node[point] at (I1) {\labIone};
\node[point] at (I2) {\labItwo};
\node[point] at (I3) {\labIthree};
\node[point] at (I4) {\labIfour};

\end{tikzpicture}
 }
    \caption{\small Modified magic pentagram game with three Alices. Each $\text{Alice}_i$ is assigned one node of a common hyperedge and receives as input the corresponding node label \mbox{$x_i \in \{0,1\}$}, with the promise that the three inputs belong to the same hyperedge. Bob is assigned the horizontal hyperedge and outputs the labels $b_1,b_2,b_3$ of its three leftmost nodes. Solid lines represent hyperedges whose parity must be even, while the dashed line corresponds to an odd parity. The value of the fourth node of each hyperedge is uniquely determined by the corresponding parity constraint.}
    \label{fig:MPGGraph3Alices}
\end{figure}
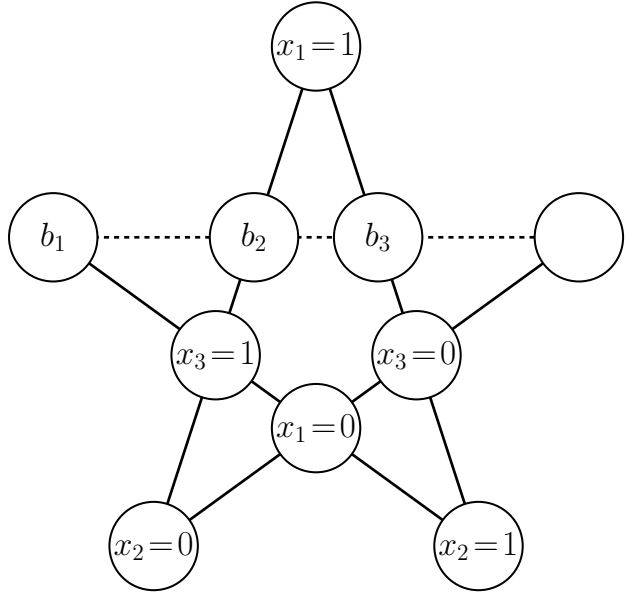
Bob, on the other hand, receives no input (or, equivalently, a fixed one) since he is~always assigned the horizontal hyperedge. Each $\textnormal{Alice}_{i}$ outputs a bit \(a_i \in \{0,1\}\) corre\-sponding to the label of her assigned node, while Bob outputs three bits $b_1,b_2,b_3$ corresponding to the labels of the three leftmost nodes of the horizontal hyperedge. To~win, their answers must be consistent on the node where the hyperedges intersect, with the value of the fourth node of each hyperedge uniquely determined by the parity condition. For example, if the Alices' inputs are \mbox{\( (x_1,x_2,x_3) = (0,0,0) \)}, the winning condition \mbox{becomes}  \mbox{\( a_1 \oplus a_2 \oplus a_3 = b_1 \oplus b_2 \oplus b_3 \oplus 1 \)}. A~quantum winning strategy for this game can be achieved using three EPR states, each shared between Bob and one of the three Alices. If \mbox{\( x_i = 0 \)}, $\textnormal{Alice}_{i}$ first applies a Hadamard gate to her qubit and then measures it in the computational basis; if \mbox{\(x_i = 1\)}, she measures \mbox{directly} in the computational basis. In both cases, her measurement outcome is her \mbox{answer}~$a_{i}$. Bob's strategy is to apply to his three qubits a unitary transformation \textsf{Maj}, which computes the majority function, for \( x,y,z \in \{0,1\} \):
\begin{equation}
    \textsf{Maj}\ket{xyz} = (-1)^{\textnormal{Maj}(x,y,z)}\ket{xyz} \,.
\end{equation}
This transformation can be implemented from elementary gates, with the circuit shown in Figure~\ref{fig:CircuitMaj}. After applying this majority transformation, Bob applies a Hadamard gate to each qubit and then measures all three qubits in the computational basis. The resulting measurement outcomes are his answers \( (b_1, b_2,b_3)\). A tedious but elementary computation shows that this yields a winning strategy.

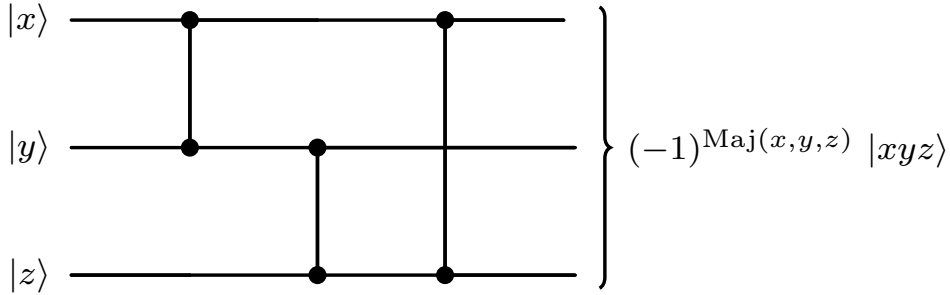
\begin{figure}[H]
    \centering
    \resizebox{0.80\linewidth}{!}{\begin{quantikz}[row sep=0.9cm, column sep=0.9cm] \lstick{{\scriptsize $\ket{x}$}} & \ctrl{1} & \qw &         \ctrl{2}  &
\:\:\rstick[wires=3]{\:\:{\scriptsize $(-1)^{\mathrm{Maj}(x,y,z)}\,\ket{x y z}$}} \qw \\
\lstick{{\scriptsize $\ket{y}$}} & \ctrl{-1}&   \ctrl{1}   & \qw      & \qw  \\
\lstick{{\scriptsize $\ket{z}$}} & \qw      & \ctrl{-1}   & \ctrl{-2}& \qw 
\end{quantikz}

 }
    \caption{\small Quantum circuit implementing the unitary \textsf{Maj}. It uses three controlled-Z gates.}
    \label{fig:CircuitMaj}
\end{figure}

\subsection{The main game}\label{subsec:TheMainGame}

We now define the main game, which involves five players: $\textnormal{Alice}_1$, $\textnormal{Alice}_2$, $\textnormal{Alice}_3$, Bob and Charlie. Each player receives one input bit: $x_1$, $x_2$, $x_3$ for the three Alices and a common bit~$z$ for Bob and Charlie. Each $\textnormal{Alice}_i$ must output a bit $a_i$, while Bob and Charlie must each output three bits, $b_1,b_2,b_3$ and $c_1,c_2,c_3$ respectively. When \mbox{$z = 0$}, the Alices and Bob are to play the tailored version of the magic pentagram game described above, under the same promise that \( x_1 \oplus x_2 \oplus x_3 = 0 \), with the important distinction that the parity conditions of the non-horizontal hyperedges are instead determined by Charlie’s output bits.\newpage More precisely, the parity of the hyperedge corresponding to the Alices’ inputs \((x_1, x_2, x_3)\) is required to be
\begin{equation}
    (c_1 \wedge \neg x_1) \oplus (c_2 \wedge \neg x_2) \oplus (c_3 \wedge \neg x_3) \,,
\end{equation}
where parity 0 means even parity and parity 1 means odd parity. Said otherwise, the parity of each of the four non-horizontal hyperedges is defined according to Table~\ref{tab:Parity}, whereas Bob's horizontal hyperedge remains of odd parity.

\begin{table}[H]
    \centering
    \caption{\small Required parity of each hyperedge given by Charlie's output \boldmath{$(c_1,c_2,c_3)$}.}
    \resizebox{0.35\linewidth}{!}{\begin{tabular}{|c|c|}
\hline
\begin{tabular}[c]{@{}c@{}}Hyperedge\\ \( (x_1,x_2,x_3) \)\end{tabular} & Parity                        \\ \hline
(0,0,0)                                                                 & \(c_1 \oplus c_2 \oplus c_3\) \\ \hline
(0,1,1)                                                                 & $c_1$\\ \hline
(1,0,1)                                                                 & $c_2$                         \\ \hline
(1,1,0)                                                                 & $c_3$                         \\ \hline
\end{tabular} }
    \vspace{5pt}
\label{tab:Parity}
\end{table}
\vspace{-1em}
For example, the parity conditions shown in Figure~\ref{fig:MPGGraph3Alices} correspond to the case where Charlie's output is \mbox{\( (c_1,c_2,c_3) = (0,0,0)\)}.
For any fixed values of Charlie's output bits, this rule ensures that an even number of the non-horizontal hyperedges have odd parity, since
\begin{equation}
    \big( c_1 \oplus c_2 \oplus c_3 \big) \oplus c_1 \oplus c_2 \oplus c_3 = 0 \, .
\end{equation}
As Bob's hyperedge is of odd parity, the total number of hyperedges with odd parity is itself odd, and therefore the resulting game cannot be won classically, as argued previously.
Never\-theless, all these variants of the magic pentagram game admit winning quantum strategies: the same strategy described earlier still applies, except that whenever \(c_i = 1\), the EPR pair shared between $\textnormal{Alice}_i$ and Bob must be replaced by the state
\begin{equation}
    \ket{\text{EPR}^{\prime}} = \textstyle{\frac{1}{\sqrt{2}}}(\ket{00} - \ket{11})\,.
\end{equation}

In the main game, when \mbox{$z = 1$}, the promise is that \mbox{\( x_1 = x_2 = x_3 = 1 \)}, and all players are instead to play three parallel instances of the equality game: their outputs must satisfy
\begin{equation}
    a_i = b_i = c_i \,, 
\end{equation}
for every $i \in \{1,2,3\}$.

A quantum winning strategy for the main game can be achieved using three GHZ states, each shared between Charlie, Bob and one of the three Alices. Each $\textnormal{Alice}_i$ behaves as though she were playing the tailored version of the magic pentagram game described above. When \mbox{$z=1$}, both Bob and Charlie measure their qubits directly in the computational basis and output the measurement outcomes. By the promise, this also implies that \mbox{\( x_1 = x_2 = x_3 = 1 \)}, which, in the Alices’ strategy for the tailored version of the magic pentagram game, corresponds to measuring in the computational basis as well. Hence, when \mbox{$z = 1$}, all players measure their GHZ states in the computational basis, ensuring that \( a_i = b_i = c_i \) for all~$i$, and therefore winning the three parallel instances of the equality game. When \mbox{$z = 0$}, Charlie first applies a Hadamard gate to each of his qubits, transforming each GHZ state into
\begin{equation}
    \textstyle{\frac{1}{\sqrt{2}}} \ket{\text{EPR}}\ket{0} + \textstyle{\frac{1}{\sqrt{2}}}\ket{\text{EPR}^{\prime}}\ket{1} \,.
\end{equation}
Therefore, after Charlie measures his qubits in the computational basis to obtain outputs $c_1$, $c_2$  and $c_3$, each pair $(\textnormal{Alice}_i, \textnormal{Bob})$ get the appropriate state for their magic pentagram strategy. The Alices and Bob can then execute their part of that strategy, which, crucially, does not depend explicitly on Charlie’s measurement outcomes. Hence, this yields a quantum winning strategy using tripartite entanglement for the main game.
    
We have defined a pseudotelepathic game that admits a quantum winning strategy relying on tripartite entanglement. In the next section, we prove that no strategy using arbitrary nonsignalling bipartite correlations, even when supplemented with multipartite shared randomness, can win this game.

\bigskip

\section{The proof}
We now prove the aforementioned result, together with a lemma that will assist in its proof and may be of independent interest. First, we recall a well-known fact about nonlocal games that admit a winning strategy.
\begin{lemma}\label{lem:SR}
    If there exists a winning strategy for a nonlocal game, then there also exists one that uses no shared randomness, but otherwise all the same resources.
\end{lemma}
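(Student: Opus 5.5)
The plan is the standard averaging argument. Suppose a strategy wins the game with certainty. Its shared randomness is a variable $\lambda \in \Lambda$ drawn from $\text{P}_{\Lambda}(\cdot)$, and its other resources (quantum states, bipartite nonsignalling boxes, and so on) are sampled independently of $\lambda$. For each fixed value $\lambda$, let $S_\lambda$ denote the strategy in which the players hard-code $\lambda$ into their local procedures and use all the other resources exactly as before. Each $S_\lambda$ uses no shared randomness and otherwise the same resources. It is still a legitimate strategy: every player's behaviour depends only on his own input, the fixed constant $\lambda$, and his share of the remaining resources. Fixing $\lambda$ therefore preserves the network structure, and nonsignalling is unaffected.

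Next, for each game instance $q$ allowed by the promise, I will write the winning probability of the original strategy as an average,
\begin{equation}
  1 = \Pr[\text{win} \mid q] = \int_{\Lambda} \Pr[\text{win} \mid q, \lambda] \, \mathrm{d}\text{P}_{\Lambda}(\lambda) \,,
\end{equation}
where $\Pr[\text{win}\mid q,\lambda]$ is exactly the winning probability of $S_\lambda$ on instance $q$. The integrand is bounded above by $1$, so for each $q$ the set $\Lambda_q$ of $\lambda$ with $\Pr[\text{win}\mid q,\lambda] < 1$ has measure zero. The input sets are finite, so there are finitely many instances, and the union $\bigcup_q \Lambda_q$ also has measure zero. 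Because $\text{P}_{\Lambda}$ has total mass $1$, some $\lambda^\ast$ lies outside this union. Then $S_{\lambda^\ast}$ wins every instance with certainty, which proves the lemma.

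The only delicate point is measure-theoretic, because $\lambda$ may be continuous. I need $\lambda \mapsto \Pr[\text{win}\mid q,\lambda]$ to be measurable, and I need the conditional description to make sense, i.e.\ that conditioning on $\lambda$ leaves the other resources' statistics unchanged. Both follow from the modelling assumption that $\lambda$ is independent of the other resources and that the strategy is defined through conditional distributions given $\lambda$. Finiteness of the input sets is what allows a single good $\lambda^\ast$ to serve all instances simultaneously. If $\Lambda$ is discrete, the argument reduces to choosing any $\lambda$ with positive probability.
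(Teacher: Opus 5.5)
Your proof is correct and follows the same route as the paper: you write the winning probability as a mixture over $\lambda$ of the strategies with $\lambda$ hard-coded, observe that winning with certainty forces these conditional strategies to win, and fix one of them. Your measure-zero argument, which uses the finiteness of the set of game instances, is in fact slightly more careful than the paper's proof: the paper sums over $\lambda$ and picks one with $\Pr_{\Lambda}(\lambda) > 0$, so it literally covers only discrete $\Lambda$, even though the paper's model explicitly allows $\lambda$ to be continuous.
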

We include the proof in Appendix~\ref{app:LemmasNonlocalGames}
for completeness. We now introduce a new lemma, which will play a key role in our proof.
\begin{lemma}\label{lem:InputDeter}
    If there exists a nonsignalling strategy for a game in which each player has only two possible inputs and behaves deterministically on one of them, then there also exists a classical strategy that reproduces exactly the same output distribution for any given input.
\end{lemma}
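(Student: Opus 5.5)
The plan is to construct the classical strategy explicitly, using shared randomness drawn from the nonsignalling distribution at a single well-chosen input. Relabel each player's inputs so that player~$i$ (for $i = 1,\dots,n$) behaves deterministically on input~$0$. Let $\text{P}(a_1,\dots,a_n \mid x_1,\dots,x_n)$ denote the output distribution of the given nonsignalling strategy. "Deterministic on input $0$" means that the marginal distribution of player~$i$'s output, given $x_i = 0$, is a point mass on some fixed value~$d_i$. By nonsignalling, this marginal does not depend on the other players' inputs, so $d_i$ is well defined.

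The classical strategy is as follows. The shared randomness $\lambda = (\lambda_1,\dots,\lambda_n)$ is sampled according to $\text{P}(\cdot \mid 1,1,\dots,1)$, i.e.\ the nonsignalling strategy's output distribution when every player receives input~$1$. Player~$i$ outputs $\lambda_i$ on input~$1$ and $d_i$ on input~$0$. Each output is a function of the player's own input and the shared randomness only, so this is a classical strategy in the sense of the preliminaries.

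It remains to check that this strategy reproduces $\text{P}(\cdot \mid x)$ for every input string~$x$. Fix $x$ and let $S = \{i : x_i = 1\}$. In the original strategy, every player $i \notin S$ outputs $d_i$ with probability~$1$. Hence the joint distribution factorizes:
\begin{equation}
\text{P}(a \mid x) = \text{P}_S(a_S \mid x)\cdot \prod_{i\notin S} [a_i = d_i],
\end{equation}
where $\text{P}_S$ is the marginal on the players in~$S$. This factorization holds because conditioning on an event of probability one does not change the marginal.

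The key step is the nonsignalling condition as stated in the preliminaries: no group of players' inputs can influence the output distribution of any other set of players. Applied to the set~$S$, it says that $\text{P}_S(a_S \mid x)$ is independent of the inputs of players outside~$S$. We may therefore change those inputs to~$1$, which gives $\text{P}_S(a_S \mid x) = \text{P}_S(a_S \mid 1,\dots,1)$. The right-hand side is exactly the distribution of $\lambda_S$ in the classical strategy. Combined with the outputs $d_i$ of the players outside~$S$, this shows that the classical strategy produces exactly $\text{P}(a \mid x)$.

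The main point requiring care is this use of nonsignalling for \emph{sets} of players, not merely single players. Marginals of single players would not suffice to recover the joint distribution on~$S$. The paper's definition does provide this group version. If only single-player nonsignalling were assumed, one would instead need the standard fact that multipartite nonsignalling implies nonsignalling for groups, obtained by summing out outputs.
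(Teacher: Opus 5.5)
Your proof is correct and is essentially the paper's argument: the shared randomness is drawn from the joint output distribution at the all-non-deterministic input, each player outputs his component of it on that input and his fixed value otherwise, and the match is checked by combining determinism with nonsignalling for the set of players on the non-deterministic input. The differences are cosmetic: you label the deterministic input $0$ rather than $1$, and you justify the factorization with a probability-one argument instead of the paper's Bayes-rule decomposition.
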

\begin{proof}(Sketch)
    We briefly outline the underlying idea. Since each player behaves deterministically on one of his two possible inputs, any nonsignalling resource can only affect the players' behaviour on the other input. The key idea is that the dependence on these resources can be shifted into the shared randomness: one can precompute the outcomes that the nonsignalling correlations would have produced on that non-deterministic input. Once this is done, the players can simulate the original strategy using only shared randomness, and thus behave classically. The formal proof is provided in Appendix~\ref{app:LemmasNonlocalGames}.
\end{proof}
\newpage
We now state and prove our main result.
\begin{theorem}\label{thm:Theorem}
    The main game presented in Section~\hyperref[subsec:TheMainGame]{\ref*{subsec:TheMainGame}} cannot be won using arbitrary nonsignalling bipartite correlations and multipartite shared randomness.
\end{theorem}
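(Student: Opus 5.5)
Assume, for contradiction, that some strategy wins the main game using only nonsignalling bipartite resources and multipartite shared randomness. By Lemma~\ref{lem:SR}, we may take this strategy to use no shared randomness at all, keeping the same bipartite nonsignalling resources. The plan is to show two things. First, on the inputs used for the equality part ($z=1$, $x_1=x_2=x_3=1$), every player is in fact deterministic. Second, Lemma~\ref{lem:InputDeter} then turns the whole strategy into a classical one, and a classical strategy cannot win the $z=0$ part.

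\emph{Step 1: determinism on the equality inputs.} Fix $i\in\{1,2,3\}$ and consider the triple $(\textnormal{Alice}_i,\textnormal{Bob},\textnormal{Charlie})$ on inputs $x_i=1$, $z=1$. Winning requires $a_i=b_i=c_i$ with certainty, so this triple realizes a 3-way $p$--coin flip for some $p$. All resources among these three parties (and with the other players, which we marginalize) are bipartite and nonsignalling, and there is no shared randomness left. The inflation argument of Section~\ref{subsec:Inflation} applies to the marginal on these three outputs; the resources touching other players act as extra local randomness attached to edges, which Fact~\ref{Fact:1} and Fact~\ref{Fact:2} still handle. This argument forces $p\in\{0,1\}$. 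Hence each of $a_i$, $b_i$, $c_i$ is deterministic on these inputs. Doing this for $i=1,2,3$ shows that each Alice on input $x_i=1$, and Bob and Charlie on input $z=1$, output fixed values. The main obstacle is making this inflation step rigorous in the presence of the other players' resources. I would handle it by building the inflation of the full five-party network, duplicating Alice$_i$, Bob and Charlie as in Figure~\ref{fig:Inflation1b}, and giving the other Alices copies attached to the appropriate copies, so that the relevant pairs are locally isomorphic to the original. A subtle point is that the nonsignalling condition makes these marginals independent of the other Alices' inputs, so the equality condition holds for any choice of their inputs.

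\emph{Step 2: classical reduction.} Every player has exactly two inputs: $x_i\in\{0,1\}$ for each Alice and $z\in\{0,1\}$ for Bob and Charlie. Each behaves deterministically on one of them ($x_i=1$, respectively $z=1$). Lemma~\ref{lem:InputDeter} therefore gives a classical strategy with identical output distributions on every input, and in particular one that still wins with certainty. By Lemma~\ref{lem:SR} applied once more, or by fixing a value of the shared randomness, we obtain a deterministic winning strategy.

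\emph{Step 3: contradiction.} With $z=0$, Charlie's outputs $(c_1,c_2,c_3)$ are now a fixed triple, so the parities in Table~\ref{tab:Parity} are fixed. Bob's outputs are fixed, and each Alice$_i$'s answer is a function of $x_i$ alone. These deterministic answers define a $\{0,1\}$ labelling of all ten pentagram nodes. The Alices' three answers label the node $x_i$, Bob labels three horizontal nodes, and the fourth node of each hyperedge is determined by its parity. One has to check that the induced values are consistent, with each node belonging to exactly two hyperedges. Winning every instance under the promise then means every hyperedge meets its parity. But the number of odd hyperedges is odd, since the four non-horizontal hyperedges contribute an even number and Bob's horizontal hyperedge is odd. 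The double-counting argument given earlier then yields a contradiction, which proves Theorem~\ref{thm:Theorem}.
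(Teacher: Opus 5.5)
Your proposal is correct and follows the paper's proof essentially step by step. You use Lemma~\ref{lem:SR} to remove shared randomness, then an inflation of the full five-party network (the paper's Figure~\ref{fig:InflationPreuve}, with $\widetilde{B}$ as the middle party linking $\widetilde{A}_i$ and $\widehat{C}$) to force determinism on $x_i=1$ and $z=1$, and finally Lemma~\ref{lem:InputDeter} to get a classical winning strategy. Your Step~3 only spells out the double-counting contradiction that the paper invokes by citing the game's classical impossibility.
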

\begin{proof}
     Assume, by way of contradiction, that there exists a winning strategy for the main game that uses bipartite nonsignalling correlations and multipartite shared randomness. By~Lemma~\ref{lem:SR}, we may assume without loss of generality that the strategy uses no shared randomness. 

     Now, let's consider the inflation scenario presented in Figure \ref{fig:InflationPreuve}, which is reminiscent of the simpler proof of the 3-way $p$--coin flip  demonstrated in Section~\hyperref[subsec:Inflation]{\ref*{subsec:Inflation}}.

\begin{figure}[hbt]
    \centering
    \begin{subfigure}[c]{0.39\textwidth}
      \centering
      \includegraphics[width=\linewidth]{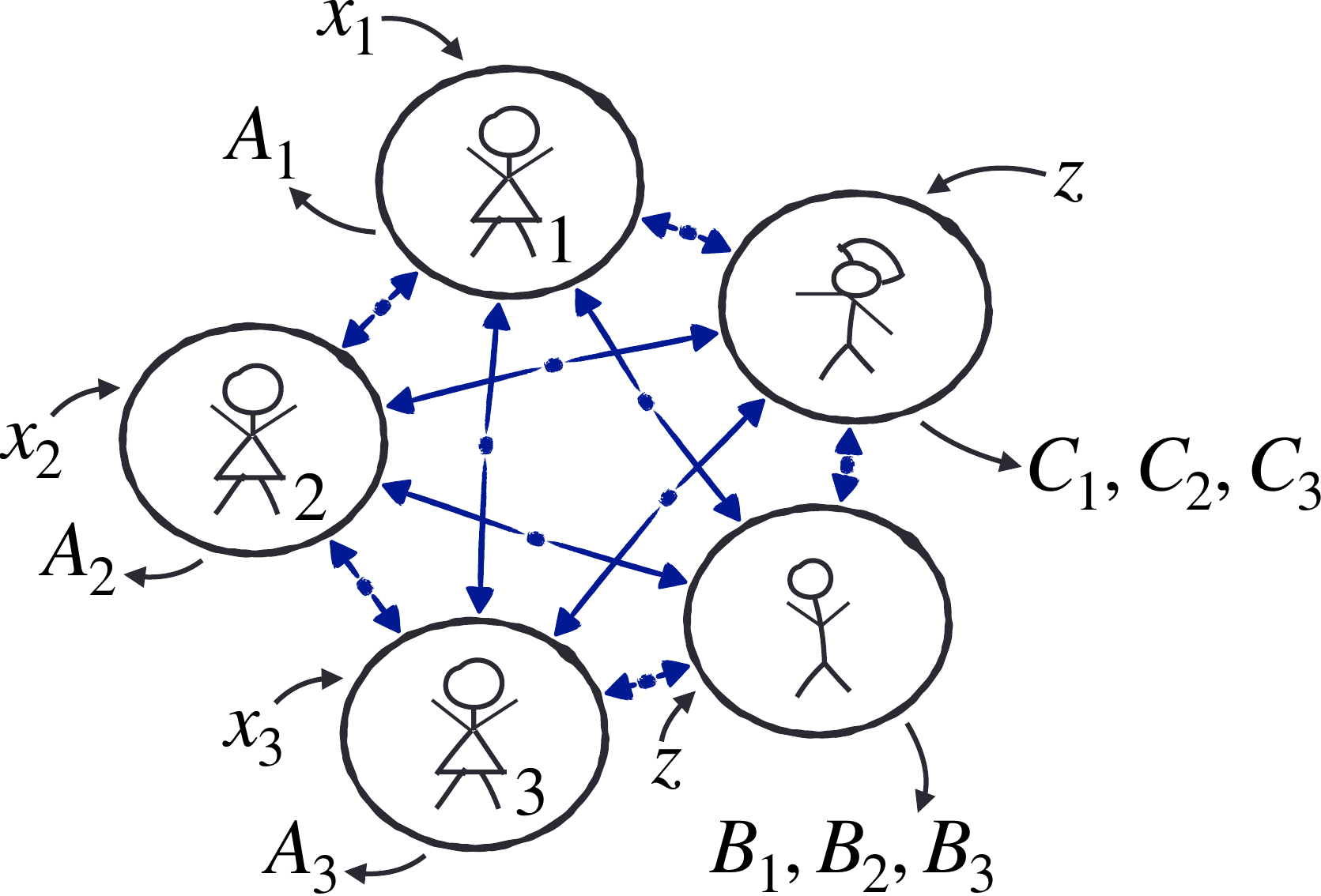}
      \caption{}
      \label{fig:Inflation2a}
    \end{subfigure}\hfill
    \begin{subfigure}[c]{0.59\textwidth}
        \centering
        \includegraphics[width=\linewidth]{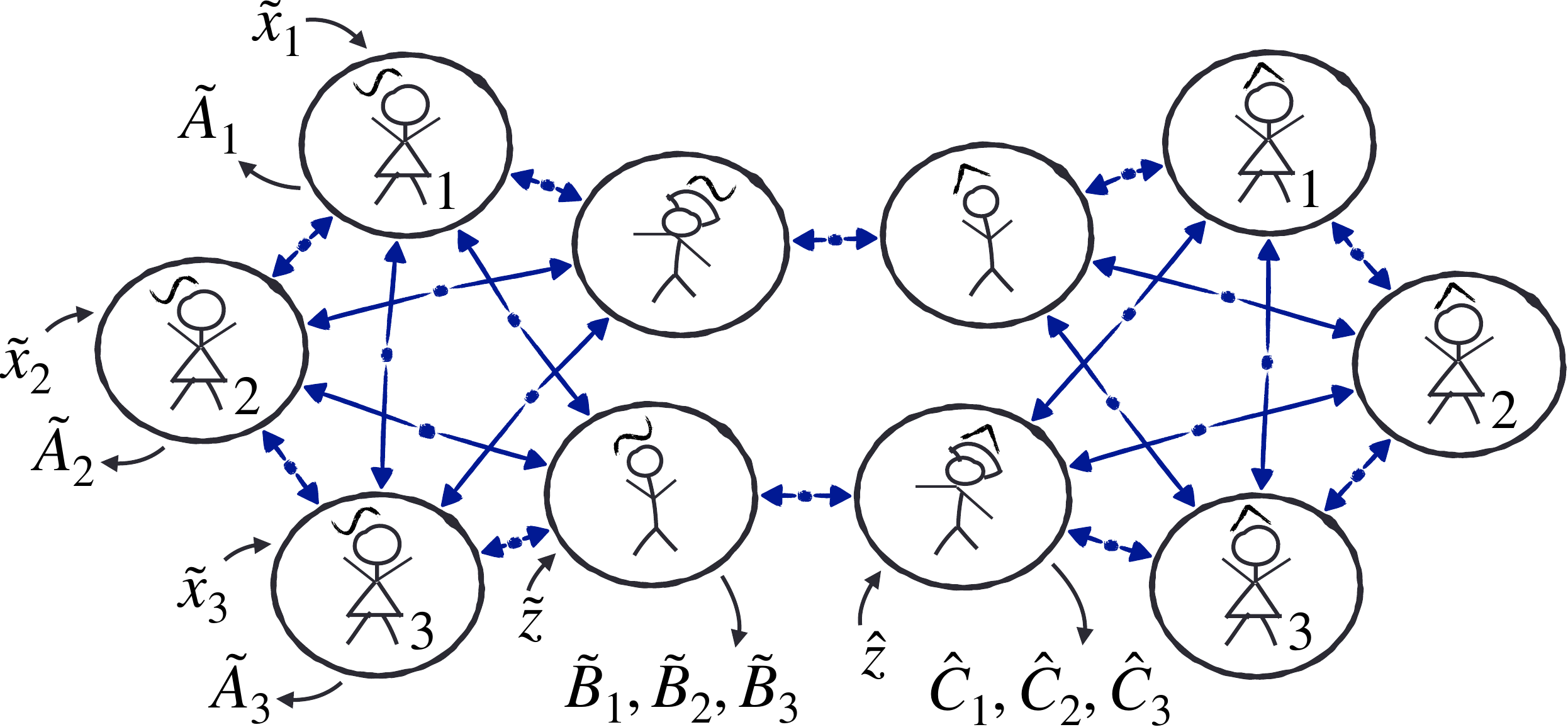}
        \caption{}
        \label{fig:Inflation2b}
    \end{subfigure}
    \caption{\small (a) Original scenario with five players ($\text{Alice}_{1}$, $\text{Alice}_{2}$, $\text{Alice}_{3}$, Bob and Charlie) sharing bipartite resources pairwise. (b) Inflated scenario obtained by duplicating and rearranging players and resources. Each resource in the inflated scenario is an identical copy of the corresponding resource in the original scenario, shared between the corresponding copies of the players. For clarity, only the inputs and outputs of players that are relevant to the argument are explicitly shown.}
    \label{fig:InflationPreuve}
    \end{figure}
    Before proceeding, we clarify our notation for the players’ outputs. For each \(i \in \{1,2,3\}\), we write $A_i(x_i)$ for the random variable corresponding to $\textnormal{Alice}_i$’s output bit when her input is $x_i$. Likewise, $B_i(z)$ and $C_i(z)$ denote the \mbox{$i$-th} output bit of Bob and Charlie, respectively, when their input is $z$. In~the inflated scenario, we use tildes and hats (e.g.~$\widetilde{A}_i(\tilde{x}_i)$, \smash{$\widehat{C}_i(\hat{z})$}) to denote the random variables of the outputs of the corresponding replicated players. 
    
    Recall that when \mbox{$z = 1$}, we have $x_1 = x_2 = x_3 = 1$, and the players must play the equality game thrice. Since the assumed strategy wins the main game, for every $i \in \{1,2,3\}$, we must have
    \begin{equation}
            \Pr\bigl[A_{i}(1) = B_{i}(1)\bigr] = 1 = \Pr\bigl[B_{i}(1) = C_{i}(1)\bigr] \,.
    \end{equation}
    On one hand, by~\ref{Fact:2}, we get
    \begin{equation}\label{eq:Alice=Bob=Charlie}
            \Pr\bigl[\widetilde{A}_{i}(1) = \widetilde{B}_{i}(1)\bigr] = 1 = \Pr\bigl[\widetilde{B}_{i}(1) = \widehat{C}_{i}(1)\bigr] \,,
    \end{equation}
    which, by transitivity, implies that \smash{\(  \Pr\bigl[\widetilde{A}_{i}(1) = \widehat{C}_{i}(1)\bigr] = 1 \)}. On the other hand, by~\ref{Fact:1}, for any $a \in \{0,1\}$,
    \begin{equation}
        \Pr\bigl[\widetilde{A}_{i}(1) = a, \widehat{C}_{i}(1) = a\bigr] = \Pr\bigl[\widetilde{A}_{i}(1) = a\bigr] \cdot \Pr\bigl[\widehat{C}_{i}(1) = a\bigr] \,,
    \end{equation}
    which gives us that
\begin{align}
        1 = &\Pr\bigl[\widetilde{A}_{i}(1) = \widehat{C}_{i}(1)\bigr]\nonumber\\
        = & \Pr\bigl[\widetilde{A}_{i}(1) = 0\bigr] \cdot \Pr\bigl[\widehat{C}_{i}(1) = 0\bigr] + \Pr\bigl[\widetilde{A}_{i}(1) = 1\bigr] \cdot \Pr\bigl[\widehat{C}_{i}(1) = 1\bigr] \,.
    \end{align}
    The last equation implies the existence of a bit $a_{i}$, for each~$i$, such that $\Pr\bigl[\widetilde{A}_{i}(1) = a_{i}\bigr] = \Pr\bigl[\widehat{C}_{i}(1) = a_{i}\bigr] = 1$, and hence \smash{$\Pr\bigl[\widetilde{B}_{i}(1) = a_{i}\bigr] = 1$} as well by Eq.~(\ref{eq:Alice=Bob=Charlie}). Applying~\mbox{\ref{Fact:2}} once more, we obtain
    \begin{equation}
        \Pr\bigl[A_{i}(1) = a_{i}\bigr] = \Pr\bigl[B_{i}(1) = a_{i}\bigr] = \Pr\bigl[C_{i}(1) = a_{i}\bigr] = 1 \,.
    \end{equation}
    We conclude from this that each $\textnormal{Alice}_{i}$ always outputs $a_{i}$ on input \mbox{$x_{i}=1$}. Likewise, Bob and Charlie always output $(a_1,a_2,a_3)$ on input \mbox{$z=1$}. Hence, all five players behave deterministically on one of their two possible inputs. By~Lemma~\ref{lem:InputDeter}, we may therefore conclude that there exists a winning strategy in which all players are classical, which contradicts the fact that the main game is pseudotelepathic.
\end{proof}
Combined with the winning strategy presented in Section~\hyperref[subsec:TheMainGame]{\ref*{subsec:TheMainGame}}, Theorem~\ref{thm:Theorem} shows that the main game constitutes a separation between tripartite quantum correlations and arbitrary nonsignalling bipartite resources supplemented with shared randomness, in the context of pseudotelepathic games.

\section{Conclusion}

In this work, we introduced a new pseudotelepathic game whose winning strategies fundamentally require genuine tripartite quantum entanglement. We showed that the game can be won using tripartite quantum resources, in particular, three GHZ states, while no strategy using arbitrary  bipartite nonsignalling resources supplemented with shared randomness can succeed. The proof combines an inflation technique argument and a more general lemma showing that, in a nonlocal game in which each player has only two possible inputs and is deterministic on one of them, any nonsignalling strategy can be simulated by a classical one.

Our results provide a separation between tripartite entanglement and the strongest possible bipartite nonsignalling resources in the context of pseudotelepathy. It remains open whether an analogous separation can be achieved with fewer players: it would be interesting to determine whether tripartite entanglement can be separated from arbitrary bipartite nonsignalling resources and multipartite shared randomness using a pseudotelepathic game with only three or four players, rather than five.

Furthermore, it would also be interesting to investigate whether the present result can be generalised to show a separation between $N$-partite entanglement and arbitrary $(N-1)$-partite nonsignalling correlations supplemented with shared randomness, within the framework of pseudotelepathy games.

\bigskip
\textbf{Acknowledgements.} We thank Scott Aaronson for valuable discussions about the major\-ity game, including its connection to the CHSH game, and Libor Caha for suggesting to look into Mermin's magic pentagram. We acknowledge the support of the Natural Sciences and Engineering Research Council of Canada (NSERC RGPIN-2022-04341, RGPIN-2025-05422 and ALLRP-578455-2022), Fonds de recherche du Qu\'ebec (FRQ) and Institut transdisciplinaire d'information quantique (INTRIQ).

\printbibliography

\appendix
\newpage
\section{The GHZ--Mermin game cannot be won with arbitrary bipartite quantum resources.}\label{app:GHZ-Mermin}

A significant property of the GHZ--Mermin game is its \emph{rigidity}: the winning quantum strategy, and in particular the quantum state it uses, are essentially unique. This notion was introduced by Mayers and Yao under the name of \emph{self testing}~\cite{MayersYao2004}, where they proved the rigidity of the optimal quantum strategy to win the CHSH game with probability $\cos^2(\nicefrac{\mathlarger{\pi}}{8})$. More to the point, as shown by Colbeck and Kent~\cite{ColbeckKent2011}, if Alice, Bob and Charlie win the GHZ--Mermin game with the shared state $\ket{\psi}$ on $\mathcal{H}_{A} \otimes \mathcal{H}_{B} \otimes \mathcal{H}_{C}$, then there exist local isometries
\begin{equation*}
    \Phi_{A} : \mathcal{H}_{A} \rightarrow \mathbb{C}^{2} \otimes \mathcal{H}'_{A} \,,   \Phi_{B} : \mathcal{H}_{B} \rightarrow \mathbb{C}^{2} \otimes \mathcal{H}'_{B} \,,   \Phi_{C} : \mathcal{H}_{C} \rightarrow \mathbb{C}^{2} \otimes \mathcal{H}'_{C} \,,
\end{equation*}
and some state $\ket{\textnormal{junk}}$ on $\mathcal{H}'_{A} \otimes \mathcal{H}'_{B} \otimes \mathcal{H}'_{C}$ such that
\begin{equation*}
    (\Phi_A \otimes \Phi_B \otimes \Phi_C) \ket{\psi} = \ket{\text{GHZ}} \otimes \ket{\textnormal{junk}} \, .
\end{equation*}

Furthermore, the GHZ state cannot be generated amongst three non-communicating participants with only bipartite quantum resources~\cite{KraftDesignolleRitzBrunnerGuhneHuber2021}. Combining these two facts, we conclude that the GHZ--Mermin game cannot be won using arbitrary bipartite quantum resources.

\section{The majority game.}\label{app:MajorityGame}

Recall that the majority game is a 3-player nonlocal game in which Alice, Bob and Charlie each receive an input bit \(x,y,z \in \{0,1\}\). Each player must output a bit \(a,b,c \in \{0,1\}\) such that \(a \oplus b \oplus c = \text{Maj}(x,y,z)\), where \(\text{Maj}(x,y,z) = 1\) if at least two of \(x,y,z\) are equal to 1, and \(\text{Maj}(x,y,z) = 0\) otherwise.

\begin{theorem}
    No quantum strategy can win the majority game with probability greater than $\cos^{2}(\nicefrac{\mathlarger{\pi}}{8})$.
\end{theorem}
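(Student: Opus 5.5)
The plan is to reduce the majority game to the CHSH game and then invoke Tsirelson's bound~\cite{Cirelson1980}. By that bound, no quantum strategy wins CHSH with probability above $\cos^{2}(\nicefrac{\mathlarger{\pi}}{8})$ when the inputs $(x,y)$ are uniformly distributed. The reduction will show that any quantum strategy for the majority game yields, for each fixed value of Charlie's input~$z$, a quantum CHSH strategy whose success probability on every CHSH instance equals the majority-game success probability on the corresponding instance.

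\emph{Step 1: the case $z=0$.} Note that $\text{Maj}(x,y,0)=x\wedge y$. Take a quantum strategy for the majority game with shared state $\ket{\psi}$ on $\mathcal{H}_A\otimes\mathcal{H}_B\otimes\mathcal{H}_C$. Since Charlie's input is fixed to $0$, his measurement is fixed and does not depend on anything Alice or Bob learn. The CHSH strategy is then:
\begin{itemize}[label={}, leftmargin=*]
\item Alice holds both the $\mathcal{H}_A$ and $\mathcal{H}_C$ registers of $\ket{\psi}$, while Bob holds $\mathcal{H}_B$.
\item On input $x$, Alice performs her majority-game measurement for $x$, obtaining $a$, and Charlie's measurement for $z=0$, obtaining $c$; these commute since they act on different registers. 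She outputs $a'=a\oplus c$.
\item On input $y$, Bob performs his majority-game measurement and outputs $b'=b$.
\end{itemize}
The joint distribution of $(a,b,c)$ is exactly that of the majority game on input $(x,y,0)$. Hence $\Pr[a'\oplus b'=x\wedge y]$ equals the majority-game success probability on instance $(x,y,0)$.

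\emph{Step 2: the case $z=1$.} Note that $\text{Maj}(x,y,1)=x\vee y=1\oplus(\neg x\wedge\neg y)$. Apply the same construction with Charlie's measurement for $z=1$. In addition, Alice and Bob first negate their CHSH inputs before feeding them to their majority-game measurements, and Alice flips her output bit. This again gives a CHSH strategy whose success probability on $(x,y)$ equals the majority-game success probability on $(\neg x,\neg y,1)$.

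\emph{Step 3: concluding under either definition of winning probability.} By Tsirelson's bound, for each $z$ the average over the four instances $(x,y,z)$ of the majority-game success probability is at most $\cos^{2}(\nicefrac{\mathlarger{\pi}}{8})$, since these instances are in bijection with the CHSH inputs. Averaging over $z$ bounds the uniform average over all eight instances by $\cos^{2}(\nicefrac{\mathlarger{\pi}}{8})$. A fortiori, the worst-case instance has success probability at most $\cos^{2}(\nicefrac{\mathlarger{\pi}}{8})$, which is the definition adopted in this paper. For a general input distribution with full support, the same conclusion holds for the worst case.

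The only delicate point is justifying that Alice may hold and measure Charlie's subsystem. This requires Charlie's measurement to be independent of the other inputs, which holds because $z$ is fixed in each reduction. It also requires the resulting outcome statistics to be unchanged, which holds because measurements on distinct tensor factors commute. I do not expect a real obstacle beyond carefully invoking Tsirelson's bound in its average-case form.
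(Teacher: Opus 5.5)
Your proposal is correct and follows essentially the same route as the paper. Both fold Charlie into Alice with $z$ fixed, output $a\oplus c$ (negating the inputs and flipping the output when $z=1$), and invoke Tsirelson's bound; the paper instead mixes the $z=0$ and $z=1$ reductions with a shared random bit into one CHSH strategy (using $z=0$ alone for the worst case), whereas you bound each half separately, average, and get the worst case from $\min\le$ average, which is an equally valid minor variation.
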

\begin{proof}
    For the two notions of winning probability considered here, any strategy for the majority game that wins with probability~$p$ can be transformed into one that wins the CHSH game with probability at least $p$, which is not possible for $p > \cos^{2}(\nicefrac{\mathlarger{\pi}}{8})$, according to Cirel'son’s bound \cite{Cirelson1980}, whose name is often transliterated as Tsirelson. We now describe these (black-box) reductions explicitly, first for the worst-case winning probability and then for the average-case winning probability under uniform input distribution.

    Recall that the CHSH game is a 2-player game in which $\textnormal{Alice}'$ and $\textnormal{Bob}'$ each receive an input bit $x',y' \in \{0,1\}$ and must output a bit $a',b' \in \{0,1\}$ satisfying $a' \oplus b' = x' \wedge y'$.

    Assume Alice, Bob and Charlie win the majority game with worst-case probability $p$. That is, they have access to three boxes that, on inputs $x$, $y$ and $z$, respectively, output $a$, $b$ and $c$, respectively, satisfying
    \begin{equation*}
        \underset{x,y,z}{\min}\Pr[a \oplus b \oplus c = \text{Maj}(x,y,z)] = p \,.
    \end{equation*}
    In an instance of the CHSH game, $\textnormal{Alice}'$ receives $x'$ and $\textnormal{Bob}'$ receives $y'$. $\textnormal{Alice}'$ inputs $x = x'$ and $z = 0$ in Alice's and Charlie's majority game's boxes, whereas $\textnormal{Bob}'$ inputs $y = y'$ in Bob's. $\textnormal{Alice}'$ obtains $a$ and $c$ from Alice's and Charlie's boxes, and outputs $a' = a \oplus c$. $\textnormal{Bob}'$ obtains $b$ from Bob's box, and outputs $b' = b$. The probability that they win their CHSH instance is given by
    \begin{align*}
        \Pr[a' \oplus b' = x' \wedge y'] &= \Pr[(a \oplus c) \oplus b = x \wedge y]\\
        &= \Pr[a \oplus b \oplus c = x \wedge y]  \\
        &= \Pr[a \oplus b \oplus c = \text{Maj}(x,y,0)] \geq p \,.
    \end{align*}
    Hence, they win the CHSH game with probability at least $p$.

    In the case that the winning probability is taken over the uniform distribution on the inputs, Alice, Bob and Charlie's boxes for the majority game verify
    \begin{equation*}
        \sum_{x,y,z} \frac{1}{8}\Pr[a \oplus b \oplus c = \text{Maj}(x,y,z)] = p \,,
    \end{equation*}
    which can be rewritten as
    \begin{equation*}
        \frac{1}{2} \sum_{x,y} \frac{1}{4} \Pr[a \oplus b \oplus c = \text{Maj}(x,y,0)] + \frac{1}{2} \sum_{x,y} \frac{1}{4} \Pr[a \oplus b \oplus c = \text{Maj}(x,y,1)] = p \,.
    \end{equation*}
    $\textnormal{Alice}'$ and $\textnormal{Bob}'$ supplement their majority game boxes with shared randomness. Since the shared randomness can be simulated by measuring a shared quantum state, this does not grant the reduction any additional power. Concretely, they sample a shared random variable $\Lambda = \{0,1\}$ with $\Pr[\Lambda = 0] = \Pr[\Lambda = 1] = \nicefrac{1}{2}$. On one hand, when $\Lambda = 0$, $\textnormal{Alice}'$ inputs $x = x'$ and $z = 0$ in Alice's and Charlie's majority game's boxes, whereas $\textnormal{Bob}'$ inputs $y = y'$ in Bob's. $\textnormal{Alice}'$ obtains $a$ and $c$ from Alice's and Charlie's boxes and outputs $a' = a \oplus c$. $\textnormal{Bob}'$ obtains $b$ from Bob's box, and outputs $b' = b$. On the other hand, when $\Lambda = 1$, $\textnormal{Alice}'$ inputs $x = x' \oplus 1$ and $z = 1$ in Alice's and Charlie's majority game's boxes, whereas $\textnormal{Bob}'$ inputs $y = y' \oplus 1$ in Bob's. $\textnormal{Alice}'$ obtains $a$ and $c$ from Alice and Charlie's boxes and outputs $a' = a \oplus c \oplus 1$. $\textnormal{Bob}'$ obtains $b$ from Bob's box and outputs $b' = b$. The probability that they win the CHSH game is given by
    {\allowdisplaybreaks
    \begin{align*}
        & \quad\sum_{x',y'} \frac{1}{4} \Pr[a' \oplus b' = x' \wedge y']\\
        &= \frac{1}{2} \sum_{x',y'} \frac{1}{4} \Pr[a' \oplus b' = x' \wedge y' \mid \Lambda = 0] + \frac{1}{2} \sum_{x',y'} \frac{1}{4} \Pr[a' \oplus b' = x' \wedge y' \mid \Lambda = 1]\\
        &= \frac{1}{2} \sum_{x,y} \frac{1}{4} \Pr[(a \oplus c) \oplus b = x \wedge y \mid \Lambda = 0]\\*[-1ex]
        &\hspace{9.56em}\mbox{}+ \frac{1}{2} \sum_{x,y} \frac{1}{4} \Pr[(a \oplus c \oplus 1) \oplus b = (x \oplus 1) \wedge (y \oplus 1) \mid \Lambda = 1]\\
        &= \frac{1}{2} \sum_{x,y} \frac{1}{4} \Pr[a \oplus b \oplus c = x \wedge y \mid \Lambda = 0] + \frac{1}{2} \sum_{x,y} \frac{1}{4} \Pr[a \oplus b \oplus c = x \vee y \mid \Lambda = 1]\\
        &= \frac{1}{2} \sum_{x,y} \frac{1}{4} \Pr[a \oplus b \oplus c = \text{Maj}(x,y,0)] + \frac{1}{2} \sum_{x,y} \frac{1}{4} \Pr[a \oplus b \oplus c = \text{Maj}(x,y,1)]\\
        &= p \,.
    \end{align*}}
    Hence, they win the CHSH game with probability $p$, concluding the reduction.
\end{proof}

\begin{theorem}
    A quantum strategy can win the majority game with probability $\cos^{2}(\nicefrac{\mathlarger{\pi}}{8})$.
\end{theorem}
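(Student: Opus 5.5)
The plan is to exhibit an explicit strategy based on a single shared GHZ state that wins \emph{every} input instance of the majority game with probability exactly $\cos^{2}(\nicefrac{\pi}{8}) = \frac{1}{2}\bigl(1+\frac{1}{\sqrt2}\bigr)$. Both the worst-case and the average winning probability then equal this value, so combined with the previous theorem the bound is tight. The guiding idea is the standard GHZ--Mermin analysis. Each player measures his qubit of $\ket{\text{GHZ}}$ in the equatorial plane of the Bloch sphere, i.e.\ the observable $\cos\theta\, X + \sin\theta\, Y$, with an angle $\theta$ depending on his input bit, and outputs the bit that labels the eigenvalue $(-1)^{\text{output}}$.

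The first step is to compute the correlator. Write $O(\theta) = e^{-i\theta}\ket{0}\!\bra{1} + e^{i\theta}\ket{1}\!\bra{0}$. Since $\ket{\text{GHZ}}$ only has support on $\ket{000}$ and $\ket{111}$, only the terms mapping $\ket{111}\mapsto\ket{000}$ and back survive. This gives
\begin{equation*}
\bra{\text{GHZ}} O(\theta_A)\otimes O(\theta_B)\otimes O(\theta_C)\ket{\text{GHZ}} = \cos(\theta_A+\theta_B+\theta_C)\,.
\end{equation*}
Hence $\Pr[a\oplus b\oplus c = t] = \frac{1}{2}\bigl(1+(-1)^{t}\cos(\theta_A+\theta_B+\theta_C)\bigr)$. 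It is therefore enough to choose angles such that, for every input, $(-1)^{\text{Maj}(x,y,z)}\cos(\theta_A+\theta_B+\theta_C) = \frac{1}{\sqrt2}$.

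The second step is to choose the angles symmetrically: every player uses $\theta(0)=\alpha$ and $\theta(1)=\alpha+\delta$. The total angle then depends only on the number $k$ of ones among the inputs, namely $s_k = 3\alpha + k\delta$. We need $\cos s_k = +\frac{1}{\sqrt2}$ for $k\in\{0,1\}$ and $\cos s_k = -\frac{1}{\sqrt2}$ for $k\in\{2,3\}$. Taking $3\alpha = -\nicefrac{\pi}{4}$ and $\delta = \nicefrac{\pi}{2}$ gives $s_0=-\nicefrac{\pi}{4}$, $s_1=\nicefrac{\pi}{4}$, $s_2=\nicefrac{3\pi}{4}$ and $s_3=\nicefrac{5\pi}{4}$, which have exactly the required cosines. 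Concretely, $\alpha = -\nicefrac{\pi}{12}$ and $\alpha+\delta = \nicefrac{5\pi}{12}$.

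The only place requiring care is the sign and phase convention in the correlator computation, that is, making sure the equatorial measurement really yields $\cos$ of the \emph{sum} of the angles rather than of a difference. I would verify this first on the GHZ--Mermin special case, where $XXX$ has value $+1$ and $XYY$ has value $-1$ on $\ket{\text{GHZ}}$. If the convention turned out to give differences instead, flipping the sign of $\theta$ for one player restores the sum. Everything else is a direct check of four cases.
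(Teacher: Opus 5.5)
Your proposal is correct and is essentially the paper's construction: the paper puts the $\pi/4$ phase offset into the shared state $\frac{1}{\sqrt2}\ket{000}+\frac{1-i}{2}\ket{111}$, and each player applies $\sqrt{Z}$ on input $1$ before a Hadamard and a computational-basis measurement, whereas you keep the standard GHZ state and put the offset into the equatorial angles $-\pi/12$ and $5\pi/12$, so the two strategies are equivalent up to local unitaries. Your correlator $\cos(\theta_A+\theta_B+\theta_C)$ and four-case check simply make explicit the ``tedious, yet straightforward computation'' that the paper leaves to the reader.
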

\begin{proof}
    The quantum strategy is the following. Alice, Bob and Charlie share the quantum state
    \begin{equation*}
        \textstyle{\frac{1}{\sqrt{2}}} \ket{000} + \textstyle{\frac{1 \! - \! i}{2}} \ket{111} \, .
    \end{equation*}
    Each player, on input 1, applies a $\sqrt{Z}$ gate (which sends $\ket{0}$ to $\ket{0}$ and $\ket{1}$ to $i\ket{1}$). Afterwards, regardless of the input, each player applies a Hadamard gate and measures in the computational basis.
The measurement outcome is then the player's output. A tedious, yet straightforward computation shows that this strategy wins with probability $\cos^{2}(\nicefrac{\mathlarger{\pi}}{8})$ for each possible input.
\end{proof}

\begin{theorem}
    A strategy using only three PR--boxes can win the majority game.
\end{theorem}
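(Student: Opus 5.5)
The plan starts from the algebraic identity
\begin{equation*}
\text{Maj}(x,y,z) = (x\wedge y)\oplus(y\wedge z)\oplus(x\wedge z),
\end{equation*}
valid for all bits $x,y,z$. This holds because if at most one input is 1 all three products vanish. If exactly two inputs are 1 exactly one product is 1. If all three are 1 all three products are 1, and their XOR is $1$. Each term is a CHSH-type AND of two players' inputs, which is precisely what a PR--box between those two players produces in XOR form.

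The strategy is as follows. Place one PR--box on each pair: one between Alice and Bob, one between Bob and Charlie, and one between Alice and Charlie. Each player inputs his own bit into both boxes he holds. Then each player outputs the XOR of the two bits he receives. Concretely, write the box outputs as $(a_1,b_1)$ for the Alice--Bob box with $a_1\oplus b_1 = x\wedge y$, $(b_2,c_2)$ for the Bob--Charlie box with $b_2\oplus c_2 = y\wedge z$, and $(a_3,c_3)$ for the Alice--Charlie box with $a_3\oplus c_3 = x\wedge z$. Alice outputs $a=a_1\oplus a_3$, Bob outputs $b=b_1\oplus b_2$, and Charlie outputs $c=c_2\oplus c_3$.

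Regrouping gives $a\oplus b\oplus c = (a_1\oplus b_1)\oplus(b_2\oplus c_2)\oplus(a_3\oplus c_3)$. This equals $\text{Maj}(x,y,z)$ with certainty, on every input, by the identity. So the players win every game instance, and the strategy uses no shared randomness beyond the boxes.

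The only step needing care is the identity itself, which is a finite check on eight inputs or the case analysis above. The rest is bookkeeping. Each player uses each of his boxes exactly once, with no communication, so the strategy is legitimate.
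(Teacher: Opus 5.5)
Your proposal is correct and follows essentially the same route as the paper: one PR--box per pair of players, each player inputs his bit into both of his boxes and outputs the XOR of the two box outputs, so that $a\oplus b\oplus c = (x\wedge y)\oplus(y\wedge z)\oplus(x\wedge z) = \text{Maj}(x,y,z)$. Your explicit case check of this last identity is a small addition, since the paper simply asserts it.
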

\begin{proof}
    The strategy is the following. Each pair of players shares a PR--box. Each player inputs his input bit into both of his boxes and outputs the XOR of the two outputs from his boxes. Let \(x,y,z \in \{0,1\}\) be the inputs of Alice, Bob and Charlie, respectively. Let the outputs of Alice's boxes be $a_1,a_2$, Bob's be $b_1,b_2$ and Charlie's be $c_1,c_2$. By the PR--box correlations,
    \begin{align*}
        a_1 \oplus b_1 &= x \wedge y \,,\\
        a_2 \oplus c_1 &= x \wedge z \,,\\
        b_2 \oplus c_2 &= y \wedge z \,.
    \end{align*}
    Thus,
    \begin{align*}
        (a_1 \oplus a_2) \oplus (b_1 \oplus b_2) \oplus (c_1 \oplus c_2) &= (a_1 \oplus b_1) \oplus (a_2 \oplus c_1) \oplus (b_2 \oplus c_2)\\
        &= (x \wedge y) \oplus (x \wedge z) \oplus (y \wedge z)\\
        &= \text{Maj}(x,y,z) \,.
    \end{align*}
    Hence, the players win the majority game.
\end{proof}

\section{The combination of the magic square game with the equality game can be won using a single PR--box.}\label{app:MagicSquareAndEquality}

Inspired by the construction of Ref.~\cite{CoiteuxRoyWolfeRenou2021}, we define a nonlocal game that combines the magic square game and the equality game. The game involves three players: Alice, Bob and Charlie. Alice receives a row $x \in \{1,2,3\}$ in the magic square, while Bob receives a column $y \in \{1,2,3\}$. In addition, Bob and Charlie receive a common input bit $z$, which indicates whether Alice and Bob are to play the magic square game (when $z=0$) or they are all to play two parallel instances of the equality game (when $z=1$). The promise is that $x = 1$ whenever $z=1$. Each player must output two bits ($a_{1}, a_{2}$ for Alice, $b_{1}, b_{2}$ for Bob, and $c_{1}, c_{2}$ for Charlie) such that if $z=0$, we ignore Charlie's output and ask that Alice and Bob win their magic square game instance (the third bit is determined by the required parity, so it suffices to output two bits each) and if $z=1$, they all have to play the equality game twice (i.e.~their outputs must satisfy $a_{1} = b_{1} = c_{1}$ and $a_{2} = b_{2} = c_{2}$).

Although the game as described above does not admit a quantum winning strategy, a~slight modification suffices to make the game quantumly winnable. This modification allows us to obtain a pseudotelepathic game, since the modified game still admits no classical winning strategy. Bob is given two additional input bits $y_1$ and $y_2$, and the rule of the game is modified so that the players win the game whenever $z=0$ and \((y_1,y_2) \neq (c_1,c_2)\). Two GHZ states, each shared between Alice, Bob and Charlie, are sufficient to win this modified game. When $z=1$, Bob and Charlie measure their respective qubits in the computational basis and output their measurement outcomes. When $z=0$, Charlie measures his qubits in the Hadamard basis and outputs the resulting bits. In this case, Bob may treat his input bits $y_1, y_2$ as Charlie's outputs since the players win whenever these differ. Bob can therefore apply an appropriate local transformation to his qubits so that he and Alice share two EPR states. Alice and Bob can then win their magic square game instance with a strategy that uses two EPR states and in which Alice measures her qubits in the computational basis when given $x=1$. This choice of strategy ensures that the players also win the game when $z=1$ since Alice receives $x=1$ whenever $z=1$, even though she does not know which game they are playing. Such a strategy for the magic square game is well known \cite{Aravind2004}.

We now describe a strategy that wins this game using a single PR--box shared between Alice and Bob. Intuitively, this is possible because the magic square game with the first row fixed can be won with a single PR--box. This fact allows Alice to answer fixed outputs when $x=1$, while Bob and Charlie can answer the same fixed outputs when $z=1$, and still enables Alice and Bob to win the magic square game when $x \neq 1$. Concretely, we fix the first row of the magic square to be $(0,0,0)$ and the first column to be $(0,0,1)$, the remaining constraints are then equivalent to those of the CHSH game. Formally, Alice outputs $(0,0)$ when $x=1$. When $x=2$, she inputs $0$ into her PR--box to obtain bit $a_2$ and outputs $(0,a_2)$. When $x=3$, she inputs $1$ in her PR--box to obtain bit $a_2$ and outputs $(1,a_2)$. Bob, ignoring his auxiliary inputs $y_1$ and $y_2$, outputs $(0,0)$ when $z = 1$ or $y = 1$. Otherwise, he inputs $3-y$ into his PR--box to obtain bit $b_2$ and outputs $(0,b_2)$. Charlie outputs $(0,0)$ regardless of his inputs. It is immediate that the players win the game whenever $z=1$, since in that case $x=1$ and all three players output $(0,0)$. A simple calculation also shows that Alice and Bob win the magic square game whenever $z=0$. Hence, this combination of the magic square game with the equality game can be won using a single PR--box, even though it is pseudotelepathic. Therefore, it does not constitute a suitable candidate for exhibiting a separation between tripartite entanglement and arbitrary nonsignalling bipartite correlations.

\section{Lemmas on nonlocal games.}\label{app:LemmasNonlocalGames}

We begin by introducing notation used throughout this section. For a general $N$-player nonlocal game, let $X_i$ and $A_i$ denote the input and output sets of the $i$-th player, respectively. The probability that the players output $(a_1,\hdots,a_N) \in A_1 \times \hdots \times A_N$ on input $(x_1,\hdots,x_N) \in X_1 \times \hdots \times X_N$ is denoted by 
\begin{equation*}
    \Pr\nolimits_{A_1 \hdots A_N \mid X_1 \hdots X_N} ( a_1, \hdots, a_N \mid x_1, \hdots, x_N) \,.    
\end{equation*}
Similarly, when the players have access to shared randomness $\lambda \in \Lambda$, sampled before the game instance begins, we write instead 
\begin{equation*}
    \Pr\nolimits_{A_1 \hdots A_N \mid X_1 \hdots X_N \Lambda} ( a_1, \hdots, a_N \mid x_1, \hdots, x_N, \lambda) \,.
\end{equation*}
For any subset of players $S \subseteq \{1,\hdots,N\}$, we use the natural shorthand $\Pr_{A_{S} \mid X_{S}}$ to denote the corresponding marginal distribution over the inputs and outputs of the players in $S$. For example, $\Pr_{A_i A_j \mid X_i X_j}(a_i,a_j \mid x_i,x_j)$ denotes the probability of players $i$ and~$j$ outputting $a_i,a_j$ on inputs $x_i,x_j$.

\begin{lemma}
    If there exists a winning strategy for a nonlocal game, then there also exists one that uses no shared randomness, but otherwise all the same resources.
\end{lemma}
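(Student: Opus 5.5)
The plan is the standard averaging argument, with one twist needed for the countably or continuously infinite case. Fix a winning strategy. It consists of shared randomness $\lambda\in\Lambda$ drawn from $\text{P}_\Lambda$ together with the other resources (quantum states, nonsignalling boxes). For each $\lambda$, the players' behaviour conditioned on $\lambda$ is a strategy using only those other resources. Call it $S_\lambda$, with output distribution $\Pr_{A_1\hdots A_N\mid X_1\hdots X_N\Lambda}(\cdot\mid\cdot,\lambda)$. The strategy without shared randomness will be $S_{\lambda^*}$ for a suitably chosen $\lambda^*$. In $S_{\lambda^*}$ every player simply hardwires $\lambda^*$ into his local procedure, which uses no shared randomness but otherwise exactly the same resources.

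The key step is to show that some $\lambda^*$ makes $S_{\lambda^*}$ win every game instance with certainty. For each input tuple $x=(x_1,\hdots,x_N)$ allowed by the promise, let $w_x(\lambda)$ be the probability that $S_\lambda$ produces an accepted output on $x$. Since the original strategy wins with certainty,
\begin{equation*}
1=\int_\Lambda w_x(\lambda)\,\mathrm{d}\text{P}_\Lambda(\lambda) \quad\text{for every allowed } x \,.
\end{equation*}
Because $0\le w_x\le 1$, the bad set $\{\lambda : w_x(\lambda)<1\}$ has measure zero. There are only finitely many allowed inputs, since the input sets are finite. So the union of these bad sets over all allowed $x$ is still a null set, and its complement has probability $1$ and is in particular nonempty. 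Any $\lambda^*$ in the complement satisfies $w_x(\lambda^*)=1$ for all allowed $x$, so $S_{\lambda^*}$ is a winning strategy.

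The main obstacle is definitional rather than computational. First, one must check that conditioning on a fixed $\lambda$ gives a legitimate strategy of the same type. For quantum and nonsignalling resources this holds because the resources are independent of $\lambda$: each player's measurement choice or box input is just a function of his input and $\lambda$. Fixing $\lambda$ therefore keeps the nonsignalling property of each resource, since the resource's marginals never depended on $\lambda$ in the first place. Second, measurability of $w_x$ must be assumed, which is implicit in the paper's definition of a continuous shared variable. Finally, it is worth noting that this argument uses the \emph{certainty} of winning. With a winning probability below $1$, the same reasoning only yields a $\lambda^*$ that is at least as good as average on a single input, so the finiteness of the input set together with the probability-$1$ condition is exactly what makes the worst-case statement go through.
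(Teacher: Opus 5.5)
Your proposal is correct and uses the same averaging argument as the paper: write the strategy as a mixture over $\lambda$, observe that winning with certainty forces almost every conditional strategy to win with certainty, and hardwire one such $\lambda$. The only difference is that the paper writes a discrete sum and fixes any $\lambda$ with $\Pr_{\Lambda}(\lambda)>0$, which works for all inputs at once. Your measure-zero version, which needs the finiteness (or countability) of the input set, also rigorously covers the continuous $\Lambda$ that the paper's model explicitly allows.
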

\begin{proof}
    This follows easily from the fact that we can write
    \begin{equation*}
        \Pr\nolimits_{A_{1}\hdots A_{N} \mid X_{1}\hdots X_{N}}(a_{1},\hdots ,a_{N} \mid x_{1},\hdots ,x_{N}) =  \sum_{\lambda \in \Lambda} \Pr\nolimits_{\Lambda} (\lambda) \cdot \Pr\nolimits_{A_{1}\hdots A_{N} \mid X_{1}\hdots X_{N} \Lambda}(a_{1},\hdots ,a_{N} \mid x_{1},\hdots ,x_{N},\lambda) \,.
    \end{equation*}
    If the strategy is winning, then for every $\lambda$ such that $\Pr_{\Lambda}(\lambda) > 0 $, any output $(a_1,\hdots,a_N)$ with
    \begin{equation*}
        \Pr\nolimits_{A_{1}\hdots A_{N} \mid X_{1}\hdots X_{N} \Lambda}(a_{1},\hdots ,a_{N} \mid x_{1},\hdots ,x_{N},\lambda) > 0
    \end{equation*}
    must be accepted for the input $(x_1,\hdots,x_N)$. Thus, fixing any such $\lambda$ yields a strategy without shared randomness that wins the game.
\end{proof}

\newpage

\begin{lemma}
    If there exists a nonsignalling strategy for a game in which each player has only two possible inputs and behaves deterministically on one of them, then there also exists a classical strategy that reproduces exactly the same output distribution for any given input.
\end{lemma}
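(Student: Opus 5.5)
The plan is to build an explicit local-hidden-variable model. Shared randomness will be sampled from the nonsignalling behaviour on the all-``non-deterministic'' input. Nonsignalling will then show that this single sample reproduces every other input combination. Without loss of generality, relabel the inputs so that each player $i$ has input set $X_i=\{0,1\}$ and behaves deterministically on input $1$. I read the hypothesis as saying there is a fixed output $f_i\in A_i$ with $\Pr_{A_i\mid X_i}(f_i\mid 1)=1$. This single-party marginal is well defined precisely because the strategy is nonsignalling, so it does not depend on the other players' inputs.

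\textbf{Classical strategy.} Sample $\lambda=(\lambda_1,\hdots,\lambda_N)\in A_1\times\cdots\times A_N$ according to $\Pr_{A_1\hdots A_N\mid X_1\hdots X_N}(\,\cdot\mid 0,\hdots,0)$ and give it to all players. On input $0$, player $i$ outputs $\lambda_i$; on input $1$, he outputs $f_i$. This is clearly a classical strategy in the sense of the preliminaries.

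\textbf{Verification.} Fix an arbitrary input $x$, and let $S=\{i : x_i=0\}$ and $T$ be its complement.
\begin{enumerate}[label=(\roman*)]
\item For each $j\in T$, the marginal of $a_j$ given $x$ is the point mass on $f_j$, by the nonsignalling condition together with the determinism hypothesis. Since each coordinate outside the value $f_j$ has probability zero, the joint distribution factorises:
\[
\Pr(a\mid x)=\Pr\nolimits_{A_S\mid X}(a_S\mid x)\cdot\prod_{j\in T}\delta_{a_j,f_j}.
\]
\item By nonsignalling, the marginal of the group $S$ does not depend on the inputs of $T$. Hence $\Pr_{A_S\mid X}(a_S\mid x)=\Pr_{A_S\mid X_S}(a_S\mid 0,\hdots,0)$, and this equals the $S$-marginal of $\Pr(\cdot\mid 0,\hdots,0)$ under which $\lambda$ was sampled.
\item In the simulated strategy, the players in $S$ output exactly $\lambda_S$, whose law is that same marginal, while the players in $T$ output $f_T$.
\end{enumerate}
So the two distributions coincide for every $x$, and consequently any accepted/winning property carries over.

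\textbf{Main obstacle.} The delicate step is (i)--(ii): making precise that nonsignalling applies to arbitrary groups of players, not just single players. This is what lets the marginal on $S$ be independent of the inputs of $T$. It is also what lets a deterministic single-player marginal force the joint distribution to factor; the latter is immediate once stated, because an event of probability one can be intersected freely. I would state the group nonsignalling condition explicitly from the definition in the preliminaries before carrying out the computation. Promises on inputs cause no difficulty: the construction reproduces the distribution on all inputs, in particular on those allowed by the promise. If the all-zero input is itself excluded by a promise, we still assume, as in the paper's definition of nonsignalling strategies, that the behaviour is defined on all input combinations.
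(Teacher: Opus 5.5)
Your proof is correct and takes essentially the same approach as the paper. The paper also distributes the shared randomness as the joint output on the all-$0$ input, and has each player output $\lambda_i$ on input $0$ and the fixed value on input $1$. It then verifies the claim with the same factorisation (phrased via Bayes' rule) plus nonsignalling, which removes the dependence of the $S$-marginal on the inputs of $T$.
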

\begin{proof}
    Consider an $N$-player nonlocal game in which each player has exactly two inputs. Assume that there exists a strategy for this game such that each player behaves deterministically on one of them. We refer to this strategy as the \emph{initial} strategy. Our goal is to construct a fully classical strategy that reproduces exactly the same output distribution for every input. Since each player has two possible inputs, we may, without loss of generality, label them by $X_{i} = \{0,1\}$, for all $1 \leq i \leq N$. Moreover, for each player $i$, we assume without loss of generality that the initial strategy is deterministic on input $x_i = 1$. Given inputs $x_1,\hdots,x_N$, we denote by $j_1,\hdots,j_m$ the players who received input 0 and by $k_1,\hdots,k_n$ those who received input 1, with $m + n = N$. For notational convenience, we shall write $A_{j_{1}}^{j_{m}}$ for $A_{j_{1}} \hdots A_{j_{m}}$, $A_{k_{1}}^{k_{n}}$ for $A_{k_{1}} \hdots A_{k_{n}}$, $X_{j_{1}}^{j_{m}}$ for $X_{j_{1}} \hdots X_{j_{m}}$ and  $X_{k_{1}}^{k_{n}}$ for $X_{k_{1}} \hdots X_{k_{n}}$. Since each player $i$ is deterministic on input $x_i = 1$, there exists an output $\hat{a}_i \in A_i$ such that
    \begin{equation*}
        \Pr_{A_i \mid X_i}(a_i \mid 1) = \delta(a_i,\hat{a}_i) \,,
    \end{equation*}
    where $\delta$ denotes the Kronecker delta. We begin by establishing a fact about the initial strategy. For any input $(x_1,\hdots,x_N) \in X_1 \times \hdots \times X_N$ and output $(a_1,\hdots,a_N) \in A_1 \times \hdots \times A_N$,
    \begin{align}
        &\Pr\nolimits_{A_{1}\hdots A_{N} \mid X_{1}\hdots X_{N}} (a_{1}, \hdots, a_{N} \mid x_{1}, \hdots, x_{N}) \label{eq:ProbA_NsachantX_N}\\
        &= \Pr\nolimits_{A_{j_{1}}\hdots A_{j_{m}} A_{k_{1}}\hdots A_{k_{n}}\mid X_{j_{1}}\hdots X_{j_{m}} X_{k_{1}}\hdots X_{k_{n}}} (a_{j_{1}},\hdots, a_{j_{m}}, a_{k_{1}}, \hdots, a_{k_{n}} \mid 0, \hdots, 0, 1, \hdots, 1) \label{eq:reorder}\\
        &= \Pr\nolimits_{A_{j_{1}}^{j_{m}} \mid X_{j_{1}}^{j_{m}} X_{k_{1}}^{k_{n}}} (a_{j_{1}},\hdots, a_{j_{m}} \mid 0, \hdots, 0, 1, \hdots, 1)\nonumber\\
        &\hspace{8em}\mbox{} \cdot \Pr\nolimits_{ A_{k_{1}}^{k_{n}}\mid A_{j_{1}}^{j_{m}}X_{j_{1}}^{j_{m}} X_{k_{1}}^{k_{n}}} (a_{k_{1}}, \hdots, a_{k_{n}}\mid a_{j_{1}},\hdots, a_{j_{m}},0,\hdots,0,1,\hdots,1) \label{eq:Bayes}\\
        &= \Pr\nolimits_{A_{j_{1}}^{j_{m}} \mid X_{j_{1}}^{j_{m}}} (a_{j_{1}},\hdots, a_{j_{m}} \mid 0, \hdots, 0)\nonumber\\
        &\hspace{8em}\mbox{} \cdot \Pr\nolimits_{ A_{k_{1}}^{k_{n}}\mid A_{j_{1}}^{j_{m}}X_{j_{1}}^{j_{m}} X_{k_{1}}^{k_{n}}} (a_{k_{1}}, \hdots, a_{k_{n}}\mid a_{j_{1}},\hdots, a_{j_{m}},0,\hdots,0,1,\hdots,1) \label{eq:ns}\\
        &= \Pr\nolimits_{A_{j_{1}}^{j_{m}} \mid X_{j_{1}}^{j_{m}}} (a_{j_{1}},\hdots, a_{j_{m}} \mid 0, \hdots, 0) \cdot \prod_{\ell = 1}^{n} \delta(a_{k_{\ell}},\hat{a}_{k_{\ell}}) \,,\label{eq:deter}
    \end{align} 
    where Eq.~(\ref{eq:reorder}) is obtained by a simple reordering of the players; Eq.~(\ref{eq:Bayes}) follows from an appli\-cation of Bayes' theorem; Eq.~(\ref{eq:ns}) follows from the nonsignalling condition applied to the left-hand term; and Eq.~(\ref{eq:deter}) is a consequence of the fact that the $i$-th player systematically outputs $\hat{a}_{i}$ on input 1. We shall make use of this equation later on in the proof.

     We now define our new classical strategy, which satisfies the lemma's statement. The strategy is denoted by $\widehat{\Pr}$, and will make use of shared randomness $\Lambda$, which is defined on $A_{1} \times \hdots \times A_{N}$. We write $\lambda \in \Lambda$ as $\lambda = (\lambda_{1}, \hdots, \lambda_{N})$ with $\lambda_{i} \in A_{i}$ for each $i$. The probability distribution on $\Lambda$ and the players' strategy is defined as follows:
    \begin{align*}
        \widehat{\Pr}_{\Lambda}(\lambda_{1},\hdots,\lambda_{N}) &:= \Pr\nolimits_{A_{1}\hdots A_{N} \mid X_{1}\hdots X_{N}} (\lambda_{1},\hdots,\lambda_{N} \mid 0, \hdots, 0) \,,\\
        \widehat{\Pr}_{A_{i} \mid X_{i} \Lambda} (a_{i} \mid 0,\lambda_{1}, \hdots \lambda_{N}) &:= \delta(a_{i}, \lambda_{i})\,,\\
        \widehat{\Pr}_{A_{i} \mid X_{i} \Lambda} (a_{i} \mid 1,\lambda_{1}, \hdots \lambda_{N}) &:= \delta(a_{i}, \hat{a}_{i}) \,.
     \end{align*}
    That is, in the new classical strategy, the shared randomness is distributed as the players' joint output when they all receive input $0$ in the initial strategy. Player $i$ outputs the $i$-th component of the shared randomness, $\lambda_i$, if $x_i = 0$, and outputs $\hat{a}_i$ if $x_i = 1$. What remains to be shown is that this strategy reproduces exactly the same output distribution for any given input as the initial strategy. We have
    \begin{align}
        &\widehat{\Pr}_{A_{1}\hdots A_{N} \mid X_{1}\hdots X_{N}} (a_{1}, \hdots, a_{N} \mid x_{1}, \hdots, x_{N}) \\
        &= \sum_{\lambda_{1},\hdots,\lambda_{N}}\widehat{\Pr}_{\Lambda}(\lambda_{1},\hdots,\lambda_{N}) \cdot \widehat{\Pr}_{A_{1}\hdots A_{N} \mid X_{1}\hdots X_{N} \Lambda} (a_{1}, \hdots, a_{N} \mid x_{1}, \hdots, x_{N}, \lambda_{1},\hdots, \lambda_{N})\label{eq:SR} \\
        &= \sum_{\lambda_{1},\hdots,\lambda_{N}} \Pr\nolimits_{A_{1}\hdots A_{N} \mid X_{1}\hdots X_{N}} (\lambda_{1},\hdots,\lambda_{N} \mid 0, \hdots, 0) \cdot \prod_{t=1}^{m} \delta(a_{j_{t}},\lambda_{j_{t}}) \cdot \prod_{\ell=1}^{n} \delta(a_{k_{\ell}},\hat{a}_{k_{\ell}})\label{eq:defPhat} \\
        &= \sum_{\lambda_{k_{1}},\hdots,\lambda_{k_{n}}} \Pr\nolimits_{A_{j_{1}}^{j_{m}} A_{k_{1}}^{k_{n}} \mid X_{j_{1}}^{j_{m}} X_{k_{1}}^{k_{n}}} ( a_{j_{1}}, \hdots, a_{j_{m}}, \lambda_{k_{1}}, \hdots, \lambda_{k_{n}} \mid 0, \hdots, 0) \cdot \prod_{\ell=1}^{n} \delta(a_{k_{\ell}},\hat{a}_{k_{\ell}}) \label{eq:delta}\\
        &= \Pr\nolimits_{A_{j_{1}}^{j_{m}} \mid X_{j_{1}}^{j_{m}} X_{k_{1}}^{k_{n}}} ( a_{j_{1}}, \hdots, a_{j_{m}} \mid 0, \hdots, 0) \cdot \prod_{\ell=1}^{n} \delta(a_{k_{\ell}},\hat{a}_{k_{\ell}})\label{eq:probtot}\\
        &= \Pr\nolimits_{A_{j_{1}}^{j_{m}} \mid X_{j_{1}}^{j_{m}}} ( a_{j_{1}}, \hdots, a_{j_{m}} \mid 0, \hdots, 0) \cdot \prod_{\ell=1}^{n} \delta(a_{k_{\ell}},\hat{a}_{k_{\ell}})\label{eq:ns2}\\
        &= \Pr\nolimits_{A_{1}\hdots A_{N} \mid X_{1}\hdots X_{N}} (a_{1}, \hdots, a_{N} \mid x_{1}, \hdots, x_{N}) \,, \label{eq:PeqPhat}
    \end{align}
    where Eq.~(\ref{eq:SR}) follows directly from the definition of shared randomness; Eq.~(\ref{eq:defPhat}) uses the definition of the new strategy $\widehat{\Pr}$, where, by construction, player $i$ outputs $\lambda_{i}$ on input 0 and $\hat{a}_{i}$ on input 1; Eq.~(\ref{eq:delta}) follows from the observation that only terms satisfying $\lambda_{j_{t}} = a_{j_{t}}$ for $t \in \{1,\hdots,m\}$ have nonzero contribution; Eq.~(\ref{eq:probtot}) is obtained by summing over all possible outputs of players $k_{1},\hdots,k_{n}$; Eq.~(\ref{eq:ns2}) follows from the nonsignalling condition applied on players $j_1,\hdots,j_m$; and finally, Eq.~(\ref{eq:PeqPhat}) is an immediate consequence of Eq.~(\ref{eq:deter}) being equal to Eq.~(\ref{eq:ProbA_NsachantX_N}).
    
    Therefore, the new classical strategy reproduces exactly the output distribution of the initial strategy, which concludes the proof.
\end{proof}

\end{document}